\newmdenv[skipabove=4mm, skipbelow=4mm, leftmargin=4mm, rightmargin=4mm]{mybox}
\newtheorem{theorem}{Theorem}[section]
\newcommand\name{Spin\xspace}
\newcommand\ztwol{\mathbb{Z}_{2^l}}
\newcommand{\para}[1]{\smallskip\noindent\textbf{#1}}
\newcommand{\code}[1]{\texttt{#1}\xspace}
\newcommand{\share}[2]{\llbracket #1 \rrbracket ^\text{#2}}
\DeclarePairedDelimiter\floor{\lfloor}{\rfloor}
\DeclarePairedDelimiter\ceil{\lceil}{\rceil}
\date{}
\title{\name: An Efficient Secure Computation Framework with GPU Acceleration}
\author[1]{Wuxuan Jiang\thanks{jiangwuxuan@gmail.com Contribute equally to this paper.}}
\author[1]{Xiangjun Song\thanks{eatenbagpipe@gmail.com Contribute equally to this paper.}}
\author[1]{Shenbai Hong\thanks{hongshenbai@tsingj.com}}
\author[1]{Haijun Zhang\thanks{zhanghaijun@tsingj.com}}
\author[1]{\authorcr Wenxin Liu\thanks{liuwenxin@tsingj.com}}
\author[1]{Bo Zhao\thanks{zhaobo@tsingj.com}}
\author[2]{Wei Xu\thanks{weixu@tsinghua.edu.cn}}
\author[1]{Yi Li\thanks{xiaolixiaoyi@gmail.com}}
\affil[1]{Tsingjiao Information Technology Co. Ltd.}
\affil[2]{Tsinghua University}
\date{\today}
\begin{document}
	
	\maketitle
	
	\begin{abstract}

  Accuracy and efficiency remain challenges for multi-party computation (MPC) frameworks.
  \name is a GPU-accelerated MPC framework that supports multiple computation parties and a dishonest majority adversarial setup. 
  We propose optimized protocols for non-linear functions that are critical for machine learning, as well as several novel optimizations specific to attention that is the fundamental unit of Transformer models,
  allowing \name to perform non-trivial CNNs training and Transformer inference without sacrificing security.
  At the backend level, \name leverages GPU, CPU, and RDMA-enabled smart network cards for acceleration.
  Comprehensive evaluations demonstrate that \name can be up to $2\times$ faster than the state-of-the-art for deep neural network training.
  For inference on a Transformer model with 18.9 million parameters, our attention-specific optimizations enable \name to achieve better efficiency, less communication, and better accuracy.
  
\end{abstract}

	\section{Introduction}
As artificial intelligence keeps shaping our daily lives~\cite{rombach2022high,openai2023gpt4,wang2023neural}, data security and privacy concerns become more important than ever, especially in scenarios where data comes from multiple parties.
A typical example is joint training of deep neural networks, which requires lots of data from different sources.
Another example is the secure inference of pre-trained models, which needs to protect both the data and the models.
Several promising technologies can contribute to resolving the dilemma of machine learning and security.  
Multi-party computation (MPC)~\cite{yao1982protocols,goldreich1987play,ben1988completeness} enables multiple parties to compute collaboratively using their inputs without compromising privacy.
Nonetheless, MPC incurs substantial computation and communication costs.  
Federated learning~\cite{mcmahan2017communication}, on the other hand, distributes a training task across multiple devices and adjusts the joint model collectively using local gradients.
Unfortunately, this optimization diminishes the level of privacy protection, leading to problems such as data poisoning and inference attacks~\cite{tolpegin2020data,nasr2019comprehensive}.
Differential privacy~\cite{dwork2014algorithmic,abadi2016deep} is another widely used technology that regulates the privacy level by adjusting the amount of noise, which may reduce its precision for high privacy levels.
In this paper, we focus on enhancing the efficiency of MPC-based approaches with provable security properties.


Non-linear functions widely appear in neural networks.
However, there exist several challenges in optimizing non-linear functions for MPC:
1) As most MPC frameworks use fixed-point numbers to represent real numbers, 
evaluating a non-linear function using na\"ive iterative methods might lead to inaccurate results.
2) Transformer models, such as \cite{graham2021levit,openai2023gpt4,devlin2018bert}, heavily rely on the attention mechanism which is constructed using approximations for non-linear functions like \code{Exponentiation}~\cite{li2022mpcformer,dong2023puma},
and there have not been solutions for performing joint optimizations across operations within an attention block to achieve optimal performance.
3) Some frameworks~\cite{knott2021crypten, watson2022piranha} improve the efficiency by exposing part of intermediate results to plaintext (e.g., Piranha~\cite{watson2022piranha} evaluates the reciprocal in plaintext when computing \code{Softmax}), which poses potential security risks.
Finding the optimal balance between security and efficiency poses a challenging task in this particular context, given the inherent difficulty in quantifying the impact of exposure on overall end-to-end security.

One natural strategy to accelerate MPC is to utilize GPUs' enormous parallel threads to handle large inputs.
However, secret sharing protocols necessitate many rounds of communication among parties, and transferring data between GPU memory and CPU memory incurs significant latency,
pulling down the acceleration benefit of GPUs~\cite{keller2022secure}.
Furthermore, some existing GPU-accelerated frameworks such as~\cite{tan2021cryptgpu, wagh2021falcon} use ABY3-style protocols~\cite{mohassel2018aby3}, which is based on \emph{honest-majority} (no two parties can collude).
People require a general multi-party computation that supports an arbitrary number of parties with a \emph{dishonest majority}.
Piranha~\cite{watson2022piranha}, a cutting-edge framework, though claiming that it supports the general case theoretically, has not been implemented yet.

We design and implement \name, an MPC framework with a dishonest majority that enables efficient training and inference of deep neural networks.
Like Piranha~\cite{watson2022piranha}, we concentrate on GPU-acceleration-capable protocols.
\name allows for faster and more accurate computations for larger deep learning models.
\name achieves these desirable properties through a collection of new designs at both the protocol and implementation levels,
We further optimize the secure inference of Transformer models by capturing common patterns of attention blocks.

In this study, we introduce a novel collection of non-linear function algorithms, including \code{Reciprocal}, \code{Exponentiation}, and \code{Logarithm}.
These algorithms enhance the efficiency of neural network training in \name.
We also use several novel attention-specific optimizations to achieve more accurate and faster Transformer inference.

At the implementation level, we employ GPUs for acceleration and design a double-ring buffer to enable asynchronous GPU-CPU memory copy and reduce the latency in the computation's critical path.
We also support \emph{remote direct memory access} (RDMA)~\cite{rdma_wiki} to accelerate communication on supported hosts (otherwise, \name reverts to the socket to remain compatible).
For computation-light but communication-intensive operations, we devise a CPU-GPU hybrid computation model that avoids unnecessary GPU-CPU memory copy and only offloads computation-intensive operations to GPUs.
For multi-head attention that employs independent matrix multiplications, we divide GPU cores into different groups and assign matrix multiplications to core groups with a carefully tuned group size.

We have implemented \name on a real-world testbed and trained four deep neural networks on real-world datasets. 
We show that, even delivering stronger security, \name may still be up to $2\times$ faster and up to 15\% more accurate than the current state-of-the-art on the same hardware.

For inference, we demonstrate that \name outperforms neural networks of different sizes, including a Transformer model with 18.9 million parameters (at 3.08 samples per second).


In summary, our contributions are as follows:
\begin{enumerate}[leftmargin=0.4cm]
\item We propose a series of algorithms for non-linear functions, achieving better accuracy and efficiency than existing ones.
\item We propose a combination of implementation-level optimizations that effectively utilize modern GPU and RDMA hardware to accelerate MPC computation.
      As a by-product, we add RDMA support to Piranha~\cite{watson2022piranha}.
\item We design a set of novel optimizations specific to attention, reducing the number of several key operations like secure multiplication in attention blocks.
\item Using \name, we implement several non-trivial deep learning models and demonstrate significant performance improvements in efficiency, communication, and accuracy.
\end{enumerate}

	\section{Related Work}
\label{sec:related}

SecureML~\cite{mohassel2017secureml} is among the pioneers linking MPC to machine learning.
It uses local shift for truncation and employs a combination \code{ReLU}s to approximate \code{Sigmoid} and \code{Softmax},
which works well for both training and inference of simple machine learning models such as logistic regression and simple neural networks.
Following SecureML, \cite{rathee2020cryptflow2} provides more efficient protocols for comparison and division, while \cite{rathee2021sirnn} gives a comprehensive library of non-linear functions for machine learning.
Both of them focus on addressing secure inference.
However, SecureML-style protocols have not been proven effective for training complex neural networks.

MPC frameworks for machine learning use different security settings.   
SecureML's line of work only supports two parties.
Another line of work, pioneered by ABY3~\cite{mohassel2018aby3}, relies on an honest-majority 3-party protocol
that adopts mixed circuit computation~\cite{demmler2015aby} and offers efficient protocols for basic operations like multiplication.
Recent works like \cite{wagh2021falcon,tan2021cryptgpu,watson2022piranha,keller2022secure} are built upon the same 3-party setting.
There are also 4-party frameworks such as Trident~\cite{chaudhari2019trident}, FLASH~\cite{byali2019flash}, FantasticFour~\cite{dalskov2021fantastic} and PrivPy~\cite{li2019privpy}.
Crypten~\cite{knott2021crypten} is one of the few frameworks that support general multi-party secure machine learning, i.e. supporting $n$-party dishonest majority, as \name does.
MP-SPDZ~\cite{keller2020mp} provides a variety of MPC protocols with varying levels of security, including the \code{Semi2k} protocol that supports dishonest majority.
A few works exclusively focus on 2-party secure inference. LLAMA~\cite{gupta2022llama} proposes a framework based on Function Secret Sharing (FSS) and relies on a trusted dealer.
Cheetah~\cite{huang2022cheetah} combines several MPC protocols to support efficient neural network inference with well-designed linear operations and non-linear operations like \code{ReLU} and truncation,
but it lacks the support for neural network training and non-linear functions like \code{Softmax} commonly used in Transformer models. 


There have been various prior attempts to use GPUs to accelerate MPC.
CryptGPU~\cite{tan2021cryptgpu} is a three-party honest majority framework that uses GPUs to accelerate neural network training. It is based on Crypten~\cite{knott2021crypten}.
However, CryptGPU still lacks the accuracy to train a deep neural network from scratch, as both \cite{watson2022piranha} and \cite{keller2022secure} have pointed out. 
Piranha~\cite{watson2022piranha} is a GPU-assisted secure machine learning framework that supports two, three, and four parties by implementing SecureML~\cite{mohassel2017secureml},
Falcon~\cite{wagh2021falcon}, and FantasticFour~\cite{dalskov2021fantastic}.
Piranha provides cutting-edge performance, demonstrating for the first time that training a VGG16-style deep neural network can be completed in about a day. Piranha can also be deployed on multiple GPUs\cite{mengmulti}.
However, Piranha does not provide accurate enough non-linear functions, lowering overall training quality, and reveals intermediate results in \code{Softmax} in exchange for performance.

Non-linear function evaluation plays a key role in neural network training and inference.
Secure non-linear function evaluation usually requires polynomial approximation, and frameworks adopt different approaches.
For example, Falcon~\cite{wagh2021falcon} proposes protocols for various non-linear functions. Some of the protocols, however, exhibit privacy leaks.
SECFLOAT~\cite{rathee2022secfloat} switches to floating-point computation for accurate and fast evaluation.
Although they produce outstanding results for numerous non-linear functions,  the high cost of addition limits their practical application.
Crypten~\cite{knott2021crypten} introduces simple approximation techniques with limited accuracy.
NFGen~\cite{fan2022nfgen} proposes an auto-fitting framework for a piece-wise polynomial approximation solution.
However, it necessitates a limited range of each function and batched comparison operations to choose the appropriate coefficients for each evaluation.
Moreover, NFGen does not support multi-variant functions like \code{Softmax}.

Recently, some works have managed to run secure inference on Transformer models.
One of the bottlenecks they encounter is the \code{Softmax} evaluation.
\cite{keller2020effectiveness} compares the effectiveness of different \code{Softmax} replacement strategies.
MPCFormer~\cite{li2022mpcformer} uses an aggressive \code{Softmax} replacement strategy called \emph{2Quad approximation}.
With the help of knowledge distillation, MPCFormer can overcome the accuracy loss caused by the rough approximation.
As the other side of the coin, the rough approximation of MPCFormer necessitates retraining models to fit the aggressive alteration and can hardly enable general inference using pre-trained models.
PUMA~\cite{dong2023puma} modifies the \code{Softmax}-\code{ReLU} replacement strategy and uses piecewise polynomials to approximate \code{GeLU}.
PUMA can evaluate an LLaMA-7B inference pass in 5 minutes and is the state-of-the-art of secure inference of large language models to our knowledge.
Our evaluations will show that our optimized attention outperforms the one based on PUMA's \code{Softmax} in both efficiency and accuracy.
	\section{System Overview}\label{sec:system}


\subsection{Threat Model}
We assume semi-honest adversaries, where each party follows the protocol but is curious about the other parties' sensitive information.
\name works in a \emph{dishonest majority} setting, i.e., for a secure computation task with $n$ parties, \name can tolerate up to $n-1$ parties' collusion.

\subsection{System Design}

Fig.~\ref{fig:architecture} provides an overview of \name,
and illustrates the workflow of a secure computation task.
\name consists of several modules, and the following five modules are key to \name.  

\begin{figure}[tb]
\centering
\includegraphics[width=0.5\textwidth]{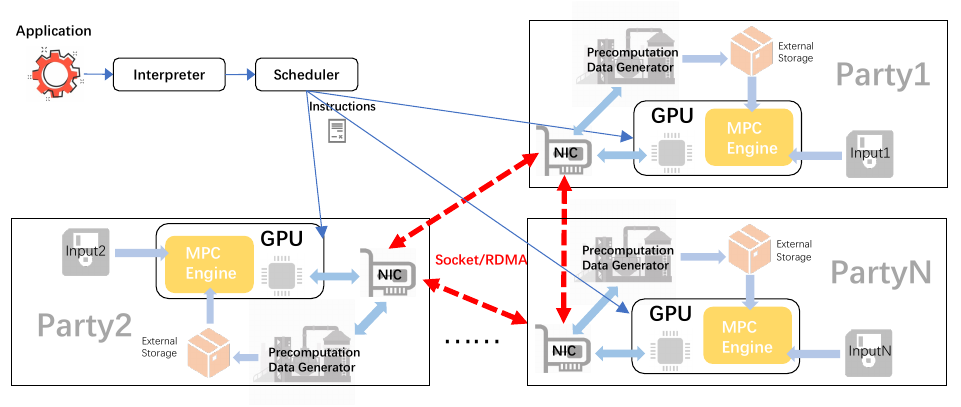}
\vspace{-0.10in}
\caption{The architecture of \name. There can exist multiple parties with individual inputs in each task.
         }
\vspace{-0.10in}
\label{fig:architecture}
\end{figure}

\para{Module 1: Application Programming Interface (API).}
Similar to many extant frameworks, \name enables programmers to write code in a high-level language akin to Python.  
The APIs are composed of three layers:
a) basic operations, including addition, multiplication, and comparison;
b) derived operations, such as non-linear functions and other operations that build on basic operations;
c) library algorithms, such as neural network training.
A programmer can either use basic operations and derived operations to implement customized algorithms for specific applications or use the higher level \name library in suitable situations.

\para{Module 2: Interpreter and \name Instructions.}
\name offers a suite of lower-level APIs, called \emph{\name Instructions}, which precisely depict the flow of a \name program using low-level engine-friendly operators.
The interpreter reads the user-level program and generates a sequence of \name instructions.
The design of the interpreter is beyond the scope of this paper.
\name instructions comprise three types:

\begin{itemize}[leftmargin=0.4cm]
    \item \code{Input/Output}: Input instructions drive the engine to read input data from data sources and secretly share the inputs with the other parties, while output instructions tell the engine to reveal results to specified receivers.
    \item \code{Memory manipulation}: The engine locally transforms the secret shares it holds.
        	Operations include \emph{matrix reshaping}, \emph{memory copy}, \emph{array concatenation}, and more.
    \item \code{Secure computation}: The engine uses MPC protocols to carry out computation on secretly shared data.
    The engine automatically establishes communication when evaluating operations that require data transfer,
    such as multiplication, exponentiation, etc.
	Applications do not need to concern themselves with when to send or receive data.
\end{itemize}

\para{Module 3: Scheduler. }
Upon receiving the instructions for a task, the scheduler generates the task description, which includes the instructions and participant information.
The scheduler then inserts the task description into a task backlog. When computational resources become available, the scheduler assigns the task to each party.

\para{Module 4: MPC Engine. }
The engine is responsible for executing MPC protocols, storing secret shares in GPU memory, receiving instructions from the scheduler, and executing the instructions using GPU cores. 
Each engine receives the same set of task descriptions and executes them synchronously and iteratively.
For each instruction, the engines invoke the corresponding MPC protocol and communicate with one another over network channels.

\para{Module 5: Precomputation Data Generator (PDG).}
To accelerate the online computation, we implement an input-independent offline phase to precompute a portion of data such as beaver triples~\cite{beaver1992efficient}.
We have a per-party PDG to produce precomputation data via oblivious transfer~\cite{naor2001efficient} and store it in disk-based storage.


\para{Hardware Accelerators. }
In \name, we use GPUs for computation acceleration and SmartNICs for efficient communication.
To achieve high-bandwidth and low-latency networking, we further employ \emph{Remote Direct Memory Access} (RDMA)~\cite{rdma_wiki} to offload data movement among machines from CPU cores.
We implement two variants of the engine: one on CPUs and another on GPUs.
The latter offloads some lightweight but latency-sensitive computation to CPUs for acceleration.

	\section{Building Blocks}
\label{sec:block}

We discuss the fundamental components of our MPC operations in this section.
We first introduce basic concepts, then list all the basic operations used in \name.

\subsection{Notations}
The set $\ztwol$ represents the integers in $[-2^{l-1}, 2^{l-1}-1]$.
The arithmetic computation is in $\ztwol$ by default in this paper.

For a signed integer $x \in \ztwol$, we use $\share{x}{A}$ to represent the $n$-out-of-$n$ arithmetic sharing of $x$.
Specifically, in a $n$-party setting, $\share{x}{A}$ means $x$ is split into $n$ shares $x_1, x_2, \dots, x_n \in \ztwol$, such that $x = \sum_{i=1}^n x_i$.
Similarly, for a boolean number $x \in \{0, 1\}$, the binary sharing $\share{x}{B}$ means $x$ is split to $n$ shares $x_1, x_2, \dots, x_n \in \{0, 1\}$
such that $x = \bigoplus_{i=1}^n x_i$, where $\bigoplus$ stands for \emph{exclusive OR} (XOR).
$\share{x_{0:l-1}}{B}$ is a list of $l$ binary sharings of bits of $x$, from $\share{x_0}{B}$ to $\share{x_{l-1}}{B}$.

We encode real numbers to fixed-point numbers of bit length $l$. A real number $x$ is encoded into a signed integer $\floor{2^px} \in \ztwol$ with precision $p$.
We use $[x]_p$ to represent the encoded integer of $x$. For example, $[7.7]_{p}=7.7\times 2^p$.
We use $\share{x,p}{A}$ for the sharing of real numbers, i.e., $\share{x,p}{A}$ = $\share{[x]_p}{A}$.
The operations on $\share{x,p}{A}$ work on $\share{[x]_p}{A}$ by default.


%

\subsection{Precomputation Data}\label{sec:precompute}
The engine can speed up online computation using input-independent precomputation data. There are three main types of precomputation data in \name:
\emph{beaver triples}~\cite{beaver1992efficient}, \emph{doubly-authenticated bits} (\emph{daBits})~\cite{rotaru2019marbled} and \emph{extended doubly-authenticated bits} (\emph{edaBits})~\cite{escudero2020improved}.
Previous approaches~\cite{escudero2020improved,beaver2001precomputing,keller2018overdrive,keller2016mascot} use either oblivious transfer (OT) or homomorphic encryption to generate these types of precomputation data.
\name uses OT to generate them. 
The process of generating precomputation data has a different pattern compared with online computation.
In this paper, we focus on optimizing the online phase of complex neural networks and leave the optimization for precomputation as future work.

\subsection{Basic Operations}
\label{sec:basic}

We enumerate the basic operations on which our nonlinear functions algorithms are built.
These operations are implemented following the \code{Semi2k} protocol of MP-SPDZ~\cite{keller2020mp}.

\para{Addition.}
$\share{z}{A} \leftarrow \code{Add}(\share{x}{A}, \share{y}{A})$, where $z = x + y$.
The arithmetic sharing approach allows additive homomorphism in nature, and we can conduct addition locally.
Similarly, in a binary sharing scheme, the parties can perform \code{XOR} locally.

\para{Scaling.}
$\share{z}{A} \leftarrow \code{Scale}(k, \share{x}{A})$, where $k$ is a public number and $z = k \times x$.

\para{Multiplication.}
$\share{z}{A} \leftarrow \code{Mul}(\share{x}{A}, \share{y}{A})$, where $z = x \times y$.
The multiplication between two shared numbers $\share{x}{A}$ and $\share{y}{A}$ demands a beaver triple to accomplish the online phase.
For the multiplication of fixed-point numbers, \code{Mul} doubles the precision,
i.e., $\share{c, 2p}{A} \leftarrow \code{Mul}(\share{a, p}{A}, \share{b, p}{A})$ where $[c]_{2p} = [a]_p \times [b]_p$.
Then we perform truncation on $\share{c, 2p}{A}$ to get $\share{c, p}{A}$ (we will introduce our truncation protocol in Section~\ref{sec:derived}).
We similarly evaluate matrix multiplication and convolution, using corresponding beaver triples.

\para{Bit to arithmetic.}
$\share{z}{A} \leftarrow \code{Bit2A}(\share{x}{B})$, where $z = x$.
This operation converts a secret bit to a secret integer and requires a precomputed daBit.

\para{Binary to arithmetic.}
$\share{z}{A} \leftarrow \code{Compose}(\share{x_{0:l-1}}{B})$, where $z=\sum_{i=0}^{l-1}2^ix_i$.
We construct this operation using $l$ independent \code{Bit2A} operations.

\para{Arithmetic to binary.}
$\share{z_{0:l-1}}{B} \leftarrow \code{Decompose}(\share{x}{A})$, where $x=\sum_{i=0}^{l-1}2^iz_i$.
We use edaBits to implement it.

\para{Comparison.}
$\share{z}{B} \leftarrow \code{Gt}(\share{x}{A}, \share{y}{A})$, where $x > y$ indicates $z = 1$, otherwise $z = 0$.
We evaluate this operation by first computing $\code{Decompose}(\share{x}{A}, \share{y}{A})$, then retrieving the most significant bit as the sign of $(x-y)$.
It is worth noting that this protocol may be incorrect if $(x-y)$ overflows in the ring,
and we can avoid such risk using two extra \code{Decompose} operations.
In most machine learning tasks, however, we can skip the two extra \code{Decompose}, since the number ranges of machine learning can hardly cause such overflow.

\subsection{Derived Operations}
\label{sec:derived}
We can use the above basic operations to construct the following derived operations.
Here we only introduce the operations that can be constructed na\"ively and leave the introduction of non-linear functions in the next section.


\para{Truncation.}
$\share{z}{A} \leftarrow \code{Truncate}(\share{x}{A}, f)$, where $z = x\cdot 2^{-f}$.
Multiplication of shared fixed-point numbers requires truncation to correct the precision.
A common way is to invoke $\code{Decompose}(x)$ and shift the bits, then invoke $\code{Compose}$ to get the result.
However, $\code{Decompose}$ involves binary addition, which requires expensive multi-round communication.
In a 2-party setting, each party can perform bit shift locally, without any communication.
But local shift causes overflow with probability proportional to the absolute value of the shared number~\cite{mohassel2017secureml}.
For example, if we truncate the result after one multiplication on $\mathbb{Z}_{2^{64}}$ with precision $p=26$, the failure probability will be as high as $2^{-11}$ (or $0.5\text{\textperthousand}$).
\cite{mohassel2018aby3} also points out that the local shift cannot work in a 3-party setting.

The authors of \cite{dalskov2020secure} give a solid probabilistic truncation protocol for unsigned integers, which we denote as \code{UnsignedTruncate}.
The output integer of \code{UnsignedTruncate} probabilistically deviates from the accurate result by $1$.
This precision loss is negligible, specifically for machine learning tasks,
as deviation by $1$ in fixed-point representation only causes a precision loss of $2^{-p}$.
We extend \code{UnsignedTruncate} to a signed version for general computation.
Speficially, for a signed integer $x$, we first apply $\share{y}{A} \leftarrow \code{UnsignedTruncate}(\share{x}{A} + 2^{l-2}, l, f)$,
then set the result as $\share{y}{A} - 2^{l-f-2}$.
The correctness is direct:
for $x \in [-2^{l-2}, 2^{l-2}-1]$, $x + 2^{l-2}$ is non-negative and can be treated as an unsigned number,
then we have $y = (x + 2^{l-2}) / 2^f$, thus $y - 2^{l-f-2}$ is the result $x / 2^f$.
For $x < -2^{l-2}$ or $x > 2^{l-2}-1$, $x + 2^{l-2}$ may be negative, thus cannot be treated as unsigned, making $\code{UnsignedTruncate}$ not work.
In machine learning tasks, we can safely assume that $x$ always falls in the range $[-2^{l-2}, 2^{l-2}-1]$, making sure that the output of \code{Truncate} is correct.
Our evaluations in Section~\ref{sec:experiment} demonstrate that \code{Truncate} with such assumption achieves desirable results.

%
%
%
%

\para{Polynomial evaluation.}
$\share{z}{A} \leftarrow \code{Evaluate}(\share{f(x)}{A})$, with $z = f(x)$ and the coefficients of $f(x)$ are public.
As a minor optimization, we can save truncations when computing $a_ix^i$. 
For instance, $f(x)=a_3x^3+a_2x^2+a_1x+a_0$ is evaluated by sequentially computing $x,x^2,x^3$.
We can skip truncations when computing $a_ix^i$, and sum up each term directly.
Then a single truncation suffices to get the result.

\para{Left-most one (LMO) extraction.}
$\share{z_{0:l-1}}{B} \leftarrow \code{LMO}(\share{x_{0:l-1}}{B})$, where there is only a bit of $1$ in $z_0, z_1, \dots, z_{l-1}$ and $\sum_{i=0}^{l-1}2^i z_i \leq \sum_{i=0}^{l-1}2^ix_i < \sum_{i=0}^{l-1}2^{i+1}z_i$.
A non-negative value's LMO is the first $1$ after leading 0s.
Our LMO extraction algorithm comes from \cite{catrina2010improved}:
First, we scan the bit sequence and run a \code{pre-OR} protocol to turn all bits after LMO into 1,
then we shift one bit and compute \code{XOR} with the original bit sequence,
after which all bits after LMO become 0 while LMO stands out.

%
%
	\section{Non-linear Functions Evaluation}
\label{sec:non_linear}

Non-linear functions often face efficiency and accuracy issues in secure neural network training and inference.
This section presents our new algorithms for non-linear functions.
It is important to note that these algorithms are based on the operations detailed in Section~\ref{sec:block} and are independent of their implementations, which implies:
a) our algorithms remain secure as long as the operations in Section~\ref{sec:block} are universally composable;
b) optimizations or new implementations for these operations maintain the correctness of our new algorithms.
In this paper, we focus on three non-linear functions: \emph{reciprocal},  \emph{exponentiation}, and \emph{logarithm}, which are frequently used in neural networks.



Researchers have proposed various approaches to calculating or approximating non-linear functions.
For instance, the Newton-Rhapson method is a common approach to approximate reciprocal.
Such approximation methods are widely adopted by recent frameworks, including~\cite{li2019privpy,tan2021cryptgpu,knott2021crypten,mohassel2017secureml,mohassel2018aby3}.
The use of these approaches in complex machine learning tasks, however, presents two obstacles: 
a) some functions such as the exponential function $e^x$ output values in vast ranges, causing polynomials to fit poorly;
b) piecewise functions, while effective in addressing accuracy issues, need expensive range detection to select the appropriate polynomial.


In this paper, we base our methods on the ones from~\cite{catrina2010secure,keller2022secure,aly2019benchmarking}.
These methods all share the concept of mapping the input value into a relatively narrow range before applying numerical techniques.
This input mapping strictly constrains the input domain for the subsequent numerical approach, resulting in stable numerical behavior and improved accuracy.
We further optimize these methods by reducing costly operations and using superior numerical methods.

\subsection{Reciprocal}
A commonly used approach for secure reciprocal evaluation is the Newton-Raphson method~\cite{enwiki:newton}.
However, this method requires a good initial guess for convergence, which limits its application.
Crypten~\cite{knott2021crypten} uses a heuristic function to generate the initial guess in a specific input domain.
The heuristic function, however, does not have a demonstrable error bound, and its evaluation is costly.
Furthermore, Crypten's solution cannot handle negative values, and an additional comparison is needed.
Falcon~\cite{wagh2021falcon} estimates the rough range of the input by revealing the position of LMO, leaking extra information.

Algorithm~\ref{alg:ssnr} shows our algorithm for reciprocal.
For the Newton-Raphson iteration $x_{n+1} = x_n(2 - zx_n)$, we set initial guess $x_0=1$. According to the convergence criterion, $0<x_0<\frac{2}{z}$ is required while $z\in[0.5,1)$.
Let $q=1-z$, then we have $x_1 = 1+q$ and $x_2=x_1(2 - zx_1)=(1+q)(1+q^2)$. Through induction, we can derive
$h(z)=x_n=\prod_{i=0}^{n-1} \Big(1+q^{2^i}\Big)=\prod_{i=0}^{n-1} \Big(1+(1-z)^{2^i}\Big)$.
The number of iteration rounds $d$ can be set flexibly according to the tradeoff between accuracy and efficiency.
Our idea is that we first obtain the absolute value of input $x$ (line~\ref{rec:absolute}) and scale $|x|$ to the range $[0.5,1)$ by finding the LMO (line~\ref{rec:lmo} - \ref{rec:truncate1}),
then evaluate the numerical method on the scaled value and finally scale the result back (line~\ref{rec:evaluate} - \ref{rec:truncate2}).
Note that, despite restricting the scaling factor in $[2^{-p}, 2^{p-1}]$ (line~\ref{rec:swap_start} - \ref{rec:zero}),
meaning we can only scale the input in range $[2^{-p}, 2^p)$ to $[0.5,1)$,
we argue that the correctness is unaffected, as the reciprocal of a number greater than $2^p$ underflows the fixed-point representation with precision $p$.

Our idea is similar to \cite{catrina2010secure}, with several optimizations.
First, the sign bit extraction of $x$ shares the secure bit decomposition with LMO (line~\ref{rec:decompose}, \ref{rec:bit2a} and \ref{rec:lmo}),
making the sign bit extraction free,
while \cite{catrina2010secure} invokes an extra secure comparison between $x$ and $0$.
Second, despite that, the standard way for computing the negative value of a signed integer is to flip the bits and add $1$,
we omit the addition and only perform the bit flip when $x$ is negative (line~\ref{rec:xor_start} - \ref{rec:xor_end}), thus saving a binary addition.
This omission only introduces an extra error of $2^{-p}$ for fixed-point numbers, which is negligible in most machine learning tasks.

\begin{algorithm}[tb]
\small
\caption{Reciprocal.}\label{alg:ssnr}
\KwData{$\share{x,p}{A},x\neq 0$,\text{ bit length }$l$,\text{ precision }$p$}
\KwPara{Newton-Raphson iteration function $h(z)$}
\KwResult{$\share{y,p}{A}\approx \share{1/x,p}{A}$}
$\share{x_{0:l-1}}{B} \leftarrow \code{Decompose}(\share{x,p}{A})$\label{rec:decompose}\\
\For{$i \gets 0$ \KwTo $l-2$} { \label{rec:xor_start}
    $\share{x_i}{B} \leftarrow \code{XOR}(\share{x_{l-1}}{B}, \share{x_i}{B})$\label{rec:xor}\\
} \label{rec:xor_end}
$\share{s}{A} \leftarrow \code{Bit2A}(\share{x_{l-1}}{B})$\label{rec:bit2a}\\
$\share{x_{+},p}{A}\leftarrow \share{x,p}{A}-2\share{s}{A}\cdot\share{x,p}{A}$\label{rec:absolute}\\
$\share{x_{0:l-2}}{B}\leftarrow\code{LMO}(\share{x_{0:l-2}}{B})$\label{rec:lmo}\\
\For{$i \gets 0$ \KwTo $p-1$} {\label{rec:swap_start}
    $\code{Swap}(\share{x_i}{B},\share{x_{2p-1-i}}{B})$\\
}\label{rec:swap_end}
$\share{x_{2p}}{B}, \share{x_{2p+1}}{B}, \dots, \share{x_{l-1}}{B} \leftarrow 0$ \label{rec:zero} \\
$\share{t,p}{A} \leftarrow \code{Compose}(\share{x_{0:2p-1}}{B})$\label{rec:compose}\\
$\share{y,p}{A} \leftarrow \code{Truncate}(\share{t,p}{A}\cdot\share{x_{+},p}{A},p)$\label{rec:truncate1}\\
$\share{y,p}{A} \leftarrow \code{Evaluate}(\share{h(y),p}{A})$\label{rec:evaluate}\\
$\share{y,p}{A} \leftarrow \code{Truncate}(\share{t,p}{A}\cdot\share{y,p}{A},p)$\label{rec:truncate2}\\
$\share{y,p}{A} \leftarrow \share{y,p}{A}-2\share{s}{A}\cdot\share{y,p}{A}$\label{rec:flip_back}\\
\end{algorithm}

\subsection{Exponentiation}
Many activation functions in machine learning, such as \code{Sigmoid} and \code{Softmax}, significantly rely on exponentiation.
Approximation approaches techniques based on Taylor series~\cite{taylor1717methodus} or Remez approximation~\cite{remez1934determination} work well for limited input ranges
but have accuracy issues for inputs with a wide range of variations.
Taylor's theorem states that the approximation is only adequate near a specific point, so \cite{keller2022secure} applies the approximation to the fractional portion of the input rather than relying entirely on the approximation as~\cite{tan2021cryptgpu} does.
For example, in deep neural network training, the inputs of \code{Softmax} may range from $-30$ to $30$, causing the output of exponentiation to vary from $e^{-30}$ to $e^{30}$.

The algorithm for exponentiation proposed by \cite{keller2022secure} enhances the one in \cite{aly2019benchmarking}, and
we will call this algorithm as \emph{SOTA-exp}.
Due to the page limit, we present the pseudocode of SOTA-exp in Appendix~\ref{appendix:baseline_exp}.
The basic idea is that we can convert the evaluation of $e^x$ to $2^z$ where $z = x\cdot\log_2{e}$.
The benefit is that, for a positive number $z$,
we can evaluate $2^{z_{int}}$ by invoking \code{Compose} where $z_{int}$ is the integer part of $z$, and approximate $2^{z_{dec}}$ with a polynomial for the decimal part of $z$.
In this case, we only need to approximate the decimal portion in $[0, 1]$ with polynomials.
The algorithm starts with changing the base to 2 by multiplying $\log_2 e$ and a truncation.
 

Algorithm~\ref{alg:ssne} shows our algorithm.  It is based on SOTA-exp but with four major differences.
The first is about how to correct the result of negative inputs.
In line~\ref{exp:addp} of Algorithm~\ref{alg:ssne}, we add a $p$ to $x'$ so that values in $[-p,0)$ become non-negative, we then use truncation to scale back the result (line~\ref{exp:result}).
If $x' < -p$, we set the final result to 0 (line~\ref{exp:sign} - \ref{exp:result}), since the result will be less than $2^{-p}$ and underflows the fixed-point representation with precision $p$.
In comparison, SOTA-exp performs a secure comparison using binary addition,
then converts the result to arithmetic sharing for multiplication.
Thus, our strategy saves a binary addition and a \code{Bit2A}.

Second, we keep the precision of the product of line~\ref{exp:product} as $2p$, and do the truncation together with the bit decomposition.
With this trick, this truncation becomes free.

Third, as SOTA-exp uses a truncation of $2^{c}$ bits to correct results of negative inputs,
it sets $l=31, p=16$ to avoid underflow caused by truncation, and thus it can only process inputs under $(l-p)\ln2 = 15\ln2$.
Our algorithm, on the contrary, exploits the full bit length $l=64$ and handles larger inputs.

Finally, instead of the Talyor expansion used by SOTA-exp, we use the Remez algorithm to obtain a more precise approximation.
In particular, the Remez algorithm has consistent error rates within the specified range compared to Taylor expansion.
We list the coefficients obtained by the Remez algorithm in Appendix~\ref{appendix:coefficients}.


\begin{algorithm}[tb]
\small
\caption{Exponentiation.}\label{alg:ssne}
\KwData{$\share{x,p}{A}$,\text{ bit length }$l$,\text{ precision }$p$}
\KwPara{Remez approximation $f_1(z)$}
\KwResult{$\share{y,p}{A}\approx \share{e^x,p}{A}$}
$\share{x',2p}{A}\leftarrow\share{x,p}{A}\cdot[\log_2 e]_p$\label{exp:product}\\
$\share{x'',2p}{A}\leftarrow [p]_{2p}+\share{x',2p}{A}$\label{exp:addp}\\
$\share{x_{0:l-1}}{B}\leftarrow\code{Decompose \& Truncate}(\share{x'',2p}{A},p)$\label{exp:decompose}\\
$c\leftarrow\ceil{\log_2(l-p-1)}$ \label{exp:intbit}\\
$\share{x_{p:p+c-1}}{A}\leftarrow\code{Bit2A}(\share{x_{p:p+c-1}}{B})$\\
$\share{I}{A}\leftarrow\prod_{i=0}^{c-1} \Big( 2^{2^{i}}\share{x_{p+i}}{A}-\share{x_{p+i}}{A}+1\Big)$\\
$\share{t,p}{A}\leftarrow\code{Compose}(\share{x_{0:p-1}}{B})$\\
$\share{y,p}{A}\leftarrow\share{I}{A}\cdot\code{Evaluate}(\share{f_1(t),p}{A})$ \label{exp:evaluate}\\
$\share{s}{A}\leftarrow 1-\code{Bit2A}(\share{x_{l-1}}{B})$\label{exp:sign}\\
$\share{y,p}{A}\leftarrow\code{Truncate}(\share{s}{A}\cdot\share{y,p}{A},p)$\label{exp:result}
\end{algorithm}

\subsection{Logarithm}
We follow the classic fast algorithm from Goldberg~\cite{fastlog} and convert it into a secure version.
The procedure begins by scaling the input value to the interval $[0.75,1.5)$ before calculating $\log_2 x$. After that, we calculate $\ln 2 \log_2 x$ as the result $\ln x$.
Specifically, for $x \in [0.75, 1.5)$, we calculate logarithm using a 4th-degree polynomial $f_2(z)$. We list the coefficients of $f_2(z)$ from \cite{fastlog} in Appendix~\ref{appendix:coefficients}.

Algorithm~\ref{alg:ssnl} illustrates the full algorithmic.
To scale $x$ to $[0.75, 1.5)$, we first map $x$ to $[0.5,1)$ (line~\ref{log:decompose}, \ref{log:lmo}, and line~\ref{log:swap_start} - \ref{log:swap_end}),
using the scaling technique in Algorithm~\ref{alg:ssnr},
then verify whether $x<0.75$. If that is true, we double $x$.
Checking the following bit of LMO (line~\ref{log:offset_start} - \ref{log:offset_end}) yields the bit $r$ that indicates whether $x<0.75$ or not.
A quick example for $x=50$: its LMO represents $2^5$ since $2^5<x<2^6$.
The bit next to LMO determines whether $x>2^5+2^4$ or not. So the scaled $x$ becomes $50/(2\times 2^5)>0.75$. 
Since the scaling factor is a power of 2, we can easily calculate its logarithm. We conclude by combining a polynomial approximation for scaled input with the logarithm of the scaling factor.

\begin{algorithm}[tb]
\small
\caption{Logarithm.}\label{alg:ssnl}
\KwData{$\share{x,p}{A}$,\text{ bit length }$l$,\text{ precision }$p$}
\KwPara{Remez approximation $f_2(z)$}
\KwResult{$\share{y,p}{A}\approx \share{\ln x,p}{A}$}
$\share{x_{0:l-1}}{B}\leftarrow\code{Decompose}(\share{x,p}{A})$ \label{log:decompose}\\
$\share{x'_{0:2p-1}}{B}\leftarrow\code{LMO}(\share{x_{0:2p-1}}{B})$ \label{log:lmo}\\
$\share{x''_{0:2p-2}}{B}\leftarrow\share{x'_{1:2p-1}}{B},\share{x''_{2p-1}}{B}\leftarrow 0$\label{log:offset_start}\\
$\share{r}{B}\leftarrow\bigoplus_{i=0}^{2p-1}(\share{x_{i}}{B}\land\share{x''_{i}}{B})$\label{log:offset_end}\\
$\share{x_{0:2p-1}}{A}\leftarrow\code{Bit2A}(\share{x_{0:2p-1}}{B})$\\
$\share{r}{A}\leftarrow\code{Bit2A}(\share{r}{B})$\\
$\share{y_L}{A}\leftarrow\sum_{i=0}^{2p-1}(i-l)\share{x_{i}}{A}$\\
\For{$i \gets 0$ \KwTo $p-1$} { \label{log:swap_start}
    $\code{Swap}(\share{x_i'}{B},\share{x_{2p-1-i}'}{B})$\\
} \label{log:swap_end}
$\share{m,p}{A}\leftarrow\code{Compose}(\share{x_{0:2p-1}'}{B})$\\
$\share{x,p}{A}\leftarrow 2\share{x,p}{A}-\share{x,p}{A}\cdot\share{r}{A}$\\
$\share{t,p}{A}\leftarrow \code{Truncate}(\share{x,p}{A}\cdot\share{m,p}{A},p)-2^p$\\
$\share{y_R,p}{A}\leftarrow\code{Evaluate}(\share{f_2(t),p}{A})$\\
$\share{y,p}{A}\leftarrow(\share{y_L,p}{A}+\share{y_R,p}{A}+\share{r}{A}\cdot 2^p)[\ln 2]_p$\\
$\share{y,p}{A}\leftarrow \code{Truncate}(\share{y,p}{A},p)$\\
\end{algorithm}

The above algorithm does not handle inputs exceeding $2^{p}$. We address this issue with an extra comparison.
When receiving an input, we truncate it by $l-2p$ bits and apply Algorithm~\ref{alg:ssnl}, then add $l-2p$ back to the result.
The comparison is used to distinguish the two cases.
We allow users to decide whether to perform this comparison in practice.

Prior work~\cite{aly2019benchmarking} proposes an algorithm based on scaling the input into $[0.5,1)$ and applying a division of polynomials to calculate the logarithm for values in $[0.5,1)$.
This division not only substantially increases computational costs but also introduces inaccuracy because the division relies on reciprocal, which are challenging to compute precisely.
Errors accumulate through two polynomial evaluations and the division. 
In contrast, our double scaling technique confines the input to the range [0.75, 1.5), enabling the subsequent polynomial approximation to be both precise and efficient in practice.

\subsection{Security Analysis}

The above algorithms use the operations in Section~\ref{sec:basic} and \ref{sec:derived} as building blocks,
and those operations work with proven security.
Therefore, we can formalize our algorithms using the \emph{arithmetic black box model} ($F_{ABB}$-$Hybrid$ model)~\cite{damgaard2003universally},
and prove the security of our algorithms under the framework of \emph{Universal Composition}~\cite{canetti2001universally}.
Appendix~\ref{appendix:security} shows the details of the formal proof.


	\section{Hardware Acceleration and Optimizations}
\label{sec:gpu}

Secure computation is typically much more intensive than its plaintext counterpart.
Secure comparison, for instance, requires expensive binary additions and multi-round communication.
Deep neural networks also typically employ large-scale matrix multiplication and convolution, requiring lots of secure 64-bit or 128-bit matrix multiplications and truncations.
\name uses GPUs for accelerating computation, and offloads networking overhead to SmartNICs.

\subsection{Computation Acceleration using GPUs}
GPUs are frequently used to accelerate highly parallelized computations, particularly matrix multiplication, and convolution.
A GPU contains thousands of CUDA cores for general computation and hundreds of tensor cores for mixed-precision computation. 
As secret-sharing-based protocols usually use 64-bit or 128-bit integers for fixed-point representation,
CryptGPU~\cite{tan2021cryptgpu} uses an integral component of four floating-point numbers to construct a 64-bit integer
and employs tensor cores to accelerate operations on floating-point numbers.
Piranha~\cite{watson2022piranha}, on the contrary, implements integer kernels using CUTLASS~\cite{CUTLASS},
and their evaluations demonstrate that their integer kernels perform better than those used in CryptGPU.
We also use CUTLASS for matrix multiplication and convolution but implement the GEMM kernels using the built-in \texttt{int64\underline{{ }}t} and \texttt{\underline{{ }{ }}int128} type of CUDA11.6~\cite{cuda11}.

In detail, we implement the integer kernel for matrix multiplication with the following considerations:
\begin{itemize}[leftmargin=0.4cm]
    \item In GPUs, the threads within a thread block can access the same shared memory, whereas all threads from distinct blocks can only share a global memory.  Shared memory is significantly faster than global memory, but much smaller.  
          Thus we can use shared memory as the cache of global memory to reduce the overall data access latency by carefully planning when and what data we need to move from global memory data into shared memory.
    \item As access to the shared memory of GPUs is significantly faster than access to the global memory, we break up the matrix into blocks with a carefully chosen size that just fits into shared memory and performs block multiplications in parallel.
          In this way, each thread block has its copy of the blocks, avoiding global memory access and contention.
    \item We construct a pipeline for data loading and computation to exploit the computational power of GPUs.
          Specifically, we carefully adjust the size of each block so that the latency of loading block matrices from global memory into shared memory matches the time required for the block matrix multiplication, keeping the pipeline full most of the time. 
\end{itemize}

\subsection{Communication using SmartNICs}
\label{sec:comm}


Communication occupies heavily in MPC. High-throughput and low-latency communication channels are urgent for large-scale secure computation.
Some SmartNICs support \emph{remote direct memory access} (RDMA)~\cite{rdma_wiki}, enabling data copy from the memory of one computer to the memory of another, bypassing the operating systems and CPUs.
We use RDMA to accelerate communication and introduce several optimizations to overcome the challenges of using RDMA in practice.

\begin{figure}[tb]
\centering
\includegraphics[width=0.35\textwidth]{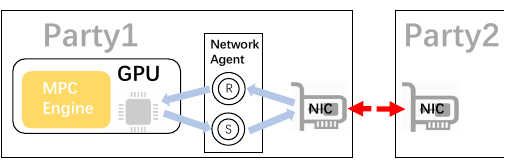}
\vspace{-0.10in}
\caption{Pipelined data transfer.
         The ring labeled `S' represents the sending data ring buffer, while the ring labeled `R' represents the receiving data ring buffer.}
\vspace{-0.10in}
\label{fig:buffer}
\end{figure}

\para{Pipelining data transfer to reduce overall latency.}
We can implement data transfer to a remote GPU in a na\"ive manner:
First, the source host gathers all the data to be sent from the GPU and stores it in the host's memory. Next, the source host transmits the data over the network channel. Finally, the remote host performs the reverse operation.
The end-to-end latency consists of three parts: two data movements between GPU memory and host memory via PCIe, and a data transfer through the network.
If the network channel is a socket channel over TCP/IP, the data movement latency via PCIe is negligible because it is typically much less than the socket latency.
In the case of RDMA, however, the data movement latency cannot be overlooked.
Given that the bandwidth of RDMA can reach 200 Gbps in our testbed, the throughput of the end-to-end data transfer from one GPU to another may be roughly $3\times$ slower than RDMA's bandwidth.

To tackle this issue, we introduce a per-host \emph{network agent} to asynchronously manage PCIe data movement and network data transfer, as illustrated in Figure~\ref{fig:buffer}.
When ready to transmit data, the engine divides it into fixed-size blocks and inserts them into the sending data buffer block by block.
A network agent worker thread constantly checks and transmits new data, while another worker thread stores newly arrived blocks and waits for the engine to retrieve them.
The entire data transfer process is thus pipelined, making the end-to-end throughput close to the RDMA bandwidth.

\para{Allocating ring buffers in pinned memory. }
To minimize the cost of frequent memory allocation and deallocation,
a network agent features two \emph{ring buffers}: one for sending data and another for receiving data.
We implement the ring buffers using \emph{CUDA pinned memory}
to bypass the operation system's virtual memory system so that data can be moved directly between GPU memory and host memory with the help of DMA.
The price of pinned memory is more initialization time and physical memory space,
as pinned memory uses locked pages and fixed physical memory addresses.
However, ring buffers use fixed memory regions in nature, without the concern of frequently allocating new memory space,
thus can fully enjoy the advantage of pinned memory.

\para{Using sockets over TCP/IP for small data transfer.}
As RDMA bypasses CPUs, the main program does not know when a transfer job finishes and usually needs an assistant channel for notification (see \cite{gdrcopy, rdma-core} for examples).
Therefore, RDMA cannot consistently outperform sockets over TCP/IP with an excess number of small data transfers.
For data size smaller than a given threshold, \name chooses a socket channel over TCP/IP, instead of an RDMA channel, for data transfer.
In our environment, we set the threshold to 0.125MB.


\subsection{CPU-GPU Hybrid Computation}
\label{sec:hybrid}

As mentioned earlier, the high latency associated with data copying between GPU memory and host memory may offset the benefits of GPU acceleration.
For computation-intensive tasks, such as the secure multiplication of large matrices, GPU acceleration is advantageous due to the prohibitive computation time.
However, for tasks requiring numerous communication rounds, each involving minimal computation and only a few small data blocks for transfer, using GPUs can reduce overall performance since the cost of data copy dominates.
For example, the \code{Softmax} function falls into this category.

We tackle this issue by employing a \emph{CPU-GPU hybrid mode}.
\name utilizes a few CPU cores (e.g., 20 cores) to handle operations characterized by lightweight computation, frequent communication, and small network packets, while GPUs manage the rest.
This approach achieves improved performance without incurring additional costs.
Furthermore, since using RDMA in practice still requires a socket for synchronization, employing RDMA for communication becomes superfluous for frequent, small package transfers.
Consequently, we utilize sockets for communication incurred by CPU activities and RDMA for GPU operations.

	\section{Attention-Specific Optimizations}
\label{sec:attention}

\begin{figure}[tb]
  \begin{subfigure}[t]{.23\textwidth}
    \centering
    \includegraphics[width=\linewidth]{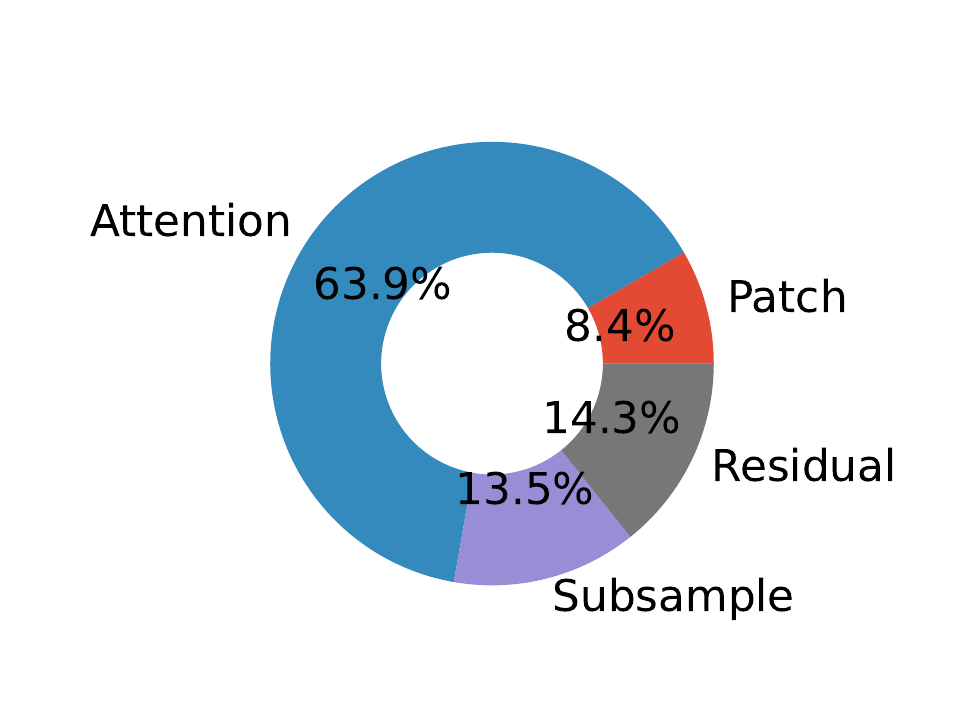}
    \vspace{-0.40in}
    \caption{LeViT-256 inference}
    \vspace{-0.10in}
    \label{fig:levit_breakdown}
  \end{subfigure}
  \hfill
  \begin{subfigure}[t]{.23\textwidth}
    \centering
    \includegraphics[width=\linewidth]{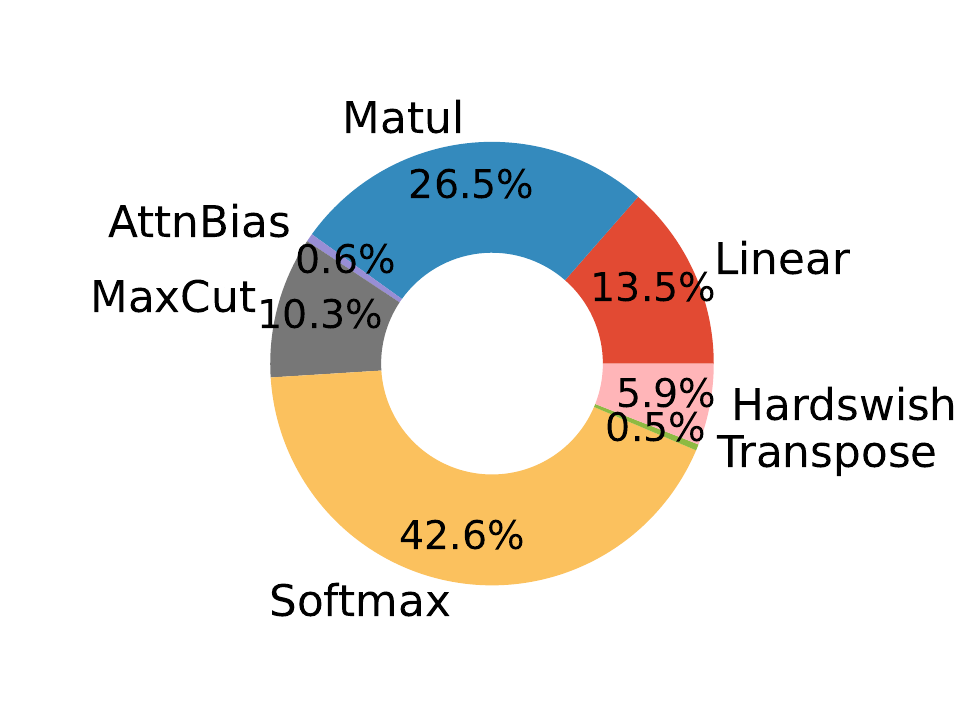}
    \vspace{-0.40in}
    \caption{Attention block}
    \vspace{-0.10in}
    \label{fig:attention_breakdown}
  \end{subfigure}
  \vspace{-0.10in}
  \caption{Runtime breakdown.}
  \vspace{-0.10in}
  \label{fig:breakdown}
\end{figure}

The attention mechanism is the fundamental building block of Transformer models, such as LeViT~\cite{graham2021levit}, GPTs~\cite{radford2019language,brown2020language} and BERT~\cite{devlin2018bert}.
An attention block involves a batch \code{Softmax} on an intermediate matrix that is usually of the largest size among all the variants, making attention evaluation expensive.
Recent Transformer models even contain millions or billions of parameters, meaning training such a model requires pretty high costs.
Fig.~\ref{fig:breakdown} demonstrates the runtime breakdown of secure LeViT-256 inference and an attention block running on \code{Semi2k}.
In various Transformer models, attention dominates the evaluation time of LeViT-256 inference.
For example, \cite{li2022mpcformer} also shows that \code{Softmax} and matrix multiplication dominate the runtime of $BERT_{BASE}$, indicating the domination of attention.
Thus, it is essential to optimize the evaluation of attention blocks.

Existing works such as \cite{dong2023puma,li2022mpcformer} focus on optimizing basic operations like \code{Softmax} and \code{GeLU} without exploiting the characteristics of attention.
Contrarily, we treat an attention block as a whole and perform optimizations across operations. 

We normally compute $\code{Softmax}(X_i)$ as $\frac{e^{X_i-\code{Max}(X)}}{\sum_i e^{X_i-\code{Max}(X)}}$ in secure computation to avoid overflow and achieve numerical stablility~\cite{dong2023puma,watson2022piranha}.
We denote the step of finding and subtracting the maximum value as \code{MaxCut}, as Fig.~\ref{fig:plaintext_attention} and \ref{fig:naive_secure_attention} illustrate.
Fig.~\ref{fig:softmax} shows the detail of a \code{Softmax} circuit.

We propose a series of new optimizations for attention evaluation in secure inference.
To reduce arithmetic computation on 64-bit numbers, we temporarily convert the representation to 32-bit fixed points and convert it back for free when necessary.
To optimize exponentiation in attention, we explore how the input of exponentiation in attention differs from general exponentiation.
We also find that we can break down \code{Softmax} and integrate the operations with other operations in attention to get better performance.
We finally implement a kernel to parallelize independent matrix multiplications for a multi-head attention mechanism.
Fig.~\ref{fig:optimized_secure_attention} illustrates our optimized attention block.

\begin{figure}[tb]
  \begin{subfigure}[t]{.15\textwidth}
    \centering
    \includegraphics[width=\linewidth]{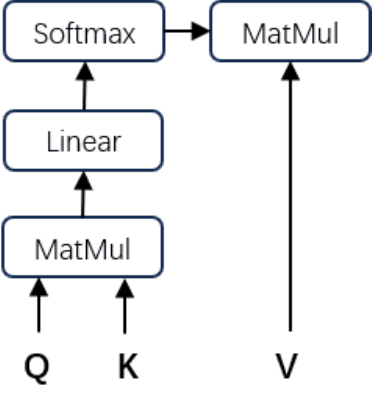}
    \vspace{-0.25in}
    \caption{Original attention}
    \label{fig:plaintext_attention}
  \end{subfigure}
  \hfill
  \begin{subfigure}[t]{.24\textwidth}
    \centering
    \includegraphics[width=\linewidth]{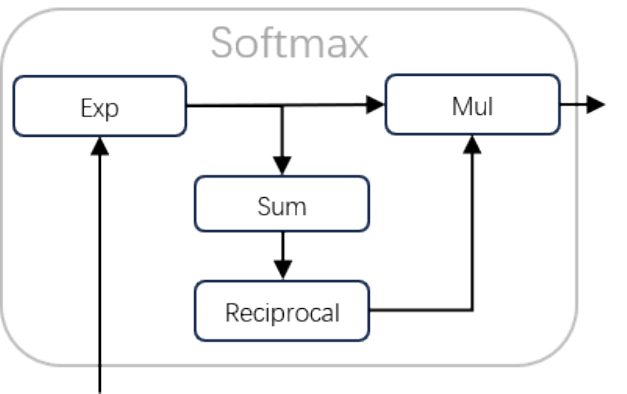}
    \vspace{-0.25in}
    \caption{Softmax}
    \label{fig:softmax}
  \end{subfigure}
  \medskip
  \medskip
  \begin{subfigure}[t]{.157\textwidth}
    \centering
    \vspace{+0.08in}
    \includegraphics[width=\linewidth]{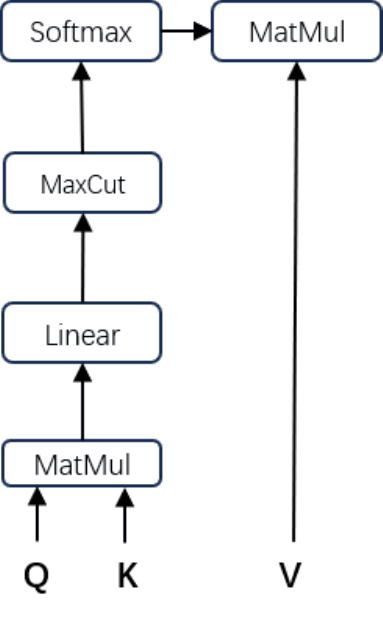}
    \vspace{-0.30in}
    \caption{Secure attention}
    \vspace{-0.30in}
    \label{fig:naive_secure_attention}
  \end{subfigure}
  \hfill
  \begin{subfigure}[t]{.24\textwidth}
    \centering
    \vspace{+0.08in}
    \includegraphics[width=\linewidth]{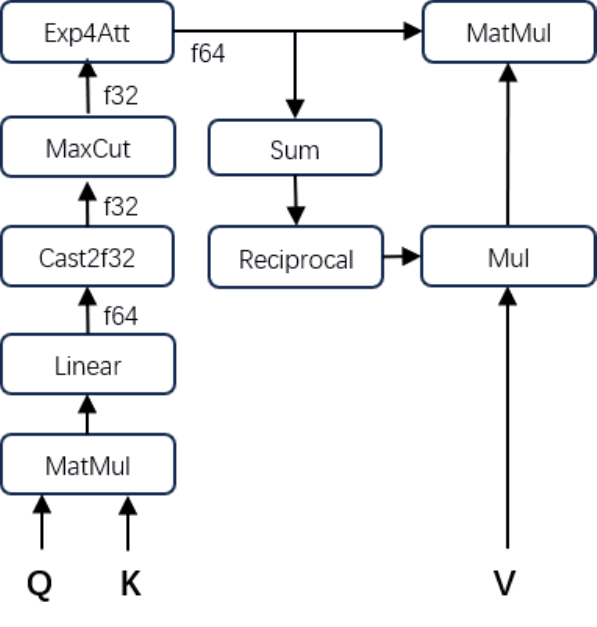}
    \vspace{-0.30in}
    \caption{Our optimized secure attention}
    \vspace{-0.30in}
    \label{fig:optimized_secure_attention}
  \end{subfigure}
  \caption{Details of attention blocks.}
  \vspace{-0.10in}
  \label{fig:attention}
  \vspace{-0.05in}
\end{figure}

\subsection{Mixing Bit Lengths}
To handle complex tasks,
\code{Fixed64} (i.e., a real number is mapped to a 64-bit fixed-point number) is necessary, as secure multiplication doubles the precision immediately, and \code{Fixed32} multiplication would overflow with the precision of more than 16 bits.
Recent works, such as \cite{watson2022piranha,knott2021crypten,dong2023puma}, usually use \code{Fixed64} throughout a task.
However, we observe that some steps in secure inference can work with 32 bits.
For example, \code{MaxCut} in attention does not require 64 bits to handle overflow so we can cast \code{Fixed64} to \code{Fixed32} before \code{MaxCut},
and our carefully designed exponentiation (see Section~\ref{section:expatt} for details) can receive \code{Fixed32} as input and outputs \code{Fixed64} without extra costs.
As casting a secret-shared value from \code{Fixed64} to \code{Fixed32} can be performed locally in each party,
we can nearly halve the cost of secure comparison that dominates the cost of \code{MaxCut}, \emph{for free}.

\subsection{Exponentiation for Attention}
\label{section:expatt}

We customize Algorithm~\ref{alg:ssne}
with several novel optimizations specific to attention.

\para{Moving scaling to plaintext.}
In secure inference, the model handler can perform any plaintext processing on model parameters beforehand.
At the attention block level, the model parameters include the weight $W$ and bias $b$ of \code{Linear} (see Fig.~\ref{fig:optimized_secure_attention}).
As the scaling step (line~\ref{exp:product}) of Algorithm~\ref{alg:ssne} uses a constant number $\log_2e$, we can actually move the scaling to \code{Linear},
that is, we use $W' = \log_2e \cdot W$ and $b' = \log_2e \cdot b$ to replace $W$ and $b$ respectively and do not need scaling in exponentiation anymore.
It is easy to see that this replacement does not hurt the correctness.
Furthermore, due to the elimination of the scaling, the input of line~\ref{exp:addp} is of precision $p$, so we do not need truncation in the next step (line~\ref{exp:decompose}).
We thus eliminate a scaling and a truncation for exponentiation.

\para{Casting from \code{Fixed32} to \code{Fixed64} for free.}
Unlike casting from \code{Fixed64} to \code{Fixed32} that can be performed locally in each party,
the normal way to cast from \code{Fixed32} to \code{Fixed64} is to extract the sign bit and extend the number with it.
However, sign bit extraction is usually a costly operation~\cite{keller2020mp,li2019privpy,mohassel2018aby3}.
In Algorithm~\ref{alg:ssne}, as there is an inherent bit decomposition (line~\ref{exp:decompose}), we only need to perform the \code{Bit2A} conversion from binary representation to 64-bit arithmetic representation as usual,
irrespective of whether the input is of bit length 32 or 64.
That is, we successfully \emph{convert \code{Fixed32} to \code{Fixed64} without extra costs}.

\para{Completing the square to save secure multiplications.}
We use Remez approximation to evaluate exponentiation on decimal fractions (line~\ref{exp:evaluate} of Algorithm~\ref{alg:ssne}).
Specifically,  we use a 4-degree polynomial (Eq.(1)), which requires three secure multiplications and four scalings.
We reduce the number of secure multiplications by \emph{completing the square}.

\begin{equation*}
\small
\begin{aligned}
f(x) &= k_4 x^4 + k_3 x^3 + k_2 x^2 + k_1 x + k_0\ \ &(1) \\
     &= k_4 (((x+t_3)^2 + t_2)^2 + t_1 x + t_0)\ \ &(2) \\
f'(x) &= k_4'(((x+t_3)^2 + t_2)^2 + t_1 x + t_0) \ \ &(3)\\
      &= k_4'((x+t_3)^2 + t_2)^2 + (k_4't_1) x + k_4't_0 &(4)
\end{aligned}
\end{equation*}

Eq.(2) shows the result of completing the square. It requires two secure multiplications and two scalings.
We further observe that $k_4$ is a common factor shared by the outputs of exponentiation on different inputs (see line~\ref{exp:evaluate} - \ref{exp:result} of Algorithm~\ref{alg:ssne}).
It will be eliminated in \code{Softmax}, thus the concrete value of $k_4$ does not affect the output of \code{Softmax}.
However, ignoring $k_4$, though saving a scaling, may make the outputs of exponentiation too large or too small, which may increase the relative error of the following steps. 
We address this issue by replacing $k_4$ with the nearest power of 2 so that the output range remains similar, then using truncation to replace the scaling.

Notably, this truncation can be free, as we can fuse it with the subsequent secure multiplications.
For example, we can replace the $k_4 = 0.0135$ of $f_1(x)$ in Algorithm~\ref{alg:ssne} with $k_4' = 2^{-6}$,
then truncate the second square $p+6$ bits instead of $p$ bits.
Finally, we get Eq.(4), which needs two secure multiplications and a scaling, \emph{saving one secure multiplication and three scalings} compared to Eq.(1).
Note that the coefficients $k_4't_0$ and $k_4't_1$ can be pre-calculated in plaintext.
Appendix~\ref{appendix:coefficients} shows the values of these newly introduced parameters.

\para{Ignoring unnecessary bits of the integer part.}
Remember that we evaluate exponentiation as $e^x = 2^{x \cdot \log_2e} = 2^{x'_{int}} \cdot f_1(x'_{dec})$, where $x'$ is $x \cdot \log_2e$, $x'_{int}$ is the integer part of $x'$,
$x'_{dec}$ is the decimal part of $x'$, and $f_1(x'_{dec})$ is the Remez approximation of $2^{x'_{dec}}$.
In Algorithm~\ref{alg:ssne}, we use a bias $p$ to ensure a correct output for $x' \in [-p, 0)$,
and line~\ref{exp:intbit} determines the maximum number of bits for calculating $2^{(x'+p)_{int}}$.
In the general case, $l = 64$ and $p = 14$, thus $c = 6$.
However, for \code{Softmax}, we have $x' \le 0$ after \code{MaxCut}, and if we adjust the evaluation of \code{Softmax}
from $\frac{e^{X_i-\code{Max}(X)}}{\sum_i e^{X_i-\code{Max}(X)}}$ to $\frac{e^{X_i-\code{Max}(X)-\epsilon}}{\sum_i e^{X_i-\code{Max}(X)-\epsilon}}$
where $\epsilon$ is a negligible number (e.g., $\epsilon = 2^{-14}$), we can get $x' < 0$, indicating $x' + p < p$.
Furthermore, since the result would underflow and would be set to $0$ when $x' < -p$, we only need to consider the case $x' + p \in [0, p)$.
Therefore, in the \code{Softmax} case, we can set $c = \log_2 p = 4$.
In this way, we further \emph{save two secure multiplications}.

Based on the above optimizations, we save 3 secure multiplications, 4 scalings, and 1 truncation for attention-specific exponentiation in total.
We summarize this variant of exponentiation in Appendix~\ref{appendix:exponentiation}.

\subsection{Downsizing Batch Multiplication.}
The last step of \code{Softmax} is to multiply $\frac{1}{\sum_i e^{X_i - \code{Max}(X_i)}}$ and $e^{X_i - \code{Max}(X_i)}$ for all $i$,
which is a batch multiplication, namely the element-wise multiplication of two arrays of the same size.
In an attention block, the size of the input of \code{Softmax} is usually larger than the size of other operations,
thus moving this batch multiplication elsewhere can downsize it.

Fig.~\ref{fig:softmax} illustrates the \code{Softmax} circuit.
The operation \code{Mul} denotes batch multiplication, and we use the symbol $*$ to denote this operation.
Now let us assume the outputs of \code{Exp}, \code{Sum} and \code{Reciprocal} in Fig.~\ref{fig:softmax} are $E$, $S$ and $R$ respectively,
then if the size of $E$ is $D_1 \times D_2$, it is easy to deduce that the size of $R$, as well as the size of $S$, is $1 \times D_2$.
Of course, we should first tile the elements of $R$ for $D_1$ times to get a new array of size $D_1 \times D_2$ before evaluating \code{Mul}.
Here we assume the notation $E * R$ implies this tiling.
Then if we put \code{Softmax} and the final matrix multiplication together, the output of an attention block can be written as $(E * R) @ V$ (see Fig.~\ref{fig:naive_secure_attention}),
where $V$ is of size $D_2 \times D_3$ and the symbol $@$ here stands for matrix multiplication.

On the other hand, we observe that $(E * R) @ V = E @ (R * V)$. 
The size of \code{MatMul} remains, whereas the size of \code{Mul} changes from $D_1 \times D_2$ to $D_2 \times D_3$.
In attention blocks, we usually have $D_1 = D_2$ and $D_2$ is multiple times larger than $D_3$.
For example, in the original paper of attention~\cite{vaswani2017attention}, we have $D_1 = D_2 = 512$ and $D_3 = 64$, so we can downsize \code{Mul} by $8\times$, as $\frac{D_1 \times D_2}{D_2 \times D_3} = 8$,
while in LeViT-256~\cite{graham2021levit}, we have $D_1 = D_2 = 196$ and $D_3 = 64$, indicating a benefit of more than $3\times$. 

\subsection{Parallelizing Matrix Multiplications}
Transformer models rely on multi-head attention to better amplify key elements and filter less important ones.
Multi-head attention mechanism involves multiple independent matrix multiplications,
and matrix multiplication in attention blocks usually works on matrices of size hundreds $\times$ hundreds.
Such a matrix multiplication can hardly fully utilize the computation resources of a GPU like A100 with thousands of cores,
as allocating too few computation tasks to each core cannot balance the time of communication, scheduling, and computation.
Thus if we calculate independent matrix multiplications in series, the GPU resources cannot be fully utilized.

We address this issue by implementing a CUDA kernel that receives two arrays of matrices as input and calculates matrix multiplications in parallel.
This operation can be formalized as $\share{\textbf{Z}}{A} \leftarrow \code{PMatMul}(\share{\textbf{X}}{A}, \share{\textbf{Y}}{A})$,
where $\textbf{X}$, $\textbf{Y}$ and $\textbf{Z}$ are arrays of matrices and $Z_i = X_i @ Y_i$.
At the implementation level, we divide the GPU cores into several groups and feed independent tasks to different groups.
Each group handles several matrix multiplications so that the GPU cores do not wait too long for scheduling and communication.
We set the number of core groups to $8$ in our paper.


%

	\section{Experiments}\label{sec:experiment}
We run four types of experiments.
The first one includes several microbenchmarks demonstrating the benefits of our optimization ideas of utilizing hardware.
The second is a function-level comparison among implementations from \cite{knott2021crypten},
state-of-the-art approaches from \cite{catrina2010secure, keller2022secure, aly2019benchmarking}, and the novel algorithms of \name.
The third is the performance of neural network training and inference, compared with Piranha~\cite{watson2022piranha}, the state-of-the-art GPU-assisted framework.
In the last one, we report the performance of secure inference of a pre-trained Transformer model with 18.9 million parameters.

\subsection{Experiment Setup}
Each party uses a machine with 512 GB RAM and uses AMD EPYC 7H12 CPU (2.6 GHz, 64 cores with 128 threads).
The engine of each party runs on an Nvidia A100 GPU with 80 GB memory and uses a Mellanox ConnectX-6 for networking.
Both A100 and ConnectX-6 interact with the host through PCIe 4.0.
The machines are connected by a Mellanox Quantum QM8790 200 Gbps InfiniBand switch.
With the QM8790 and ConnectX-6, the parties can communicate using both socket and RDMA, achieving bandwidths of up to 25 Gbps for socket channels and 200 Gbps for RDMA.

We implement \name using C++ 17 and use CUDA toolkit 11.6 for GPU-related operations.
We offer a flag that allows users to choose between CPU, GPU, and hybrid mode.
We use the RDMA-core library~\cite{rdma-core} to implement our RDMA-based network components.
In total, there are 31k lines of C++ code, including all components and algorithms.

\subsection{Microbenchmarks}
\para{Integer kernel.}
We compare the performance of our 64-bit integer kernel for matrix multiplication in \name with Piranha's CUTLASS implementation  (\url{https://github.com/jlwatson/piranha-cutlass}, commit id \code{e4f6c42}).
We employ an Nvidia A100 to run plaintext matrix multiplication of different matrix sizes. Fig.~\ref{fig:gemm} shows that for matrix size smaller than $1000 \times 1000$, \name is about 2.4$\times$ faster than Piranha-CUTLASS,
and for larger matrices, the performance of \name is close to Piranha-CUTLASS.

\begin{figure*}[tb]
\centering
\subfloat[\label{fig:gemm}]{\includegraphics[width=0.3\textwidth]{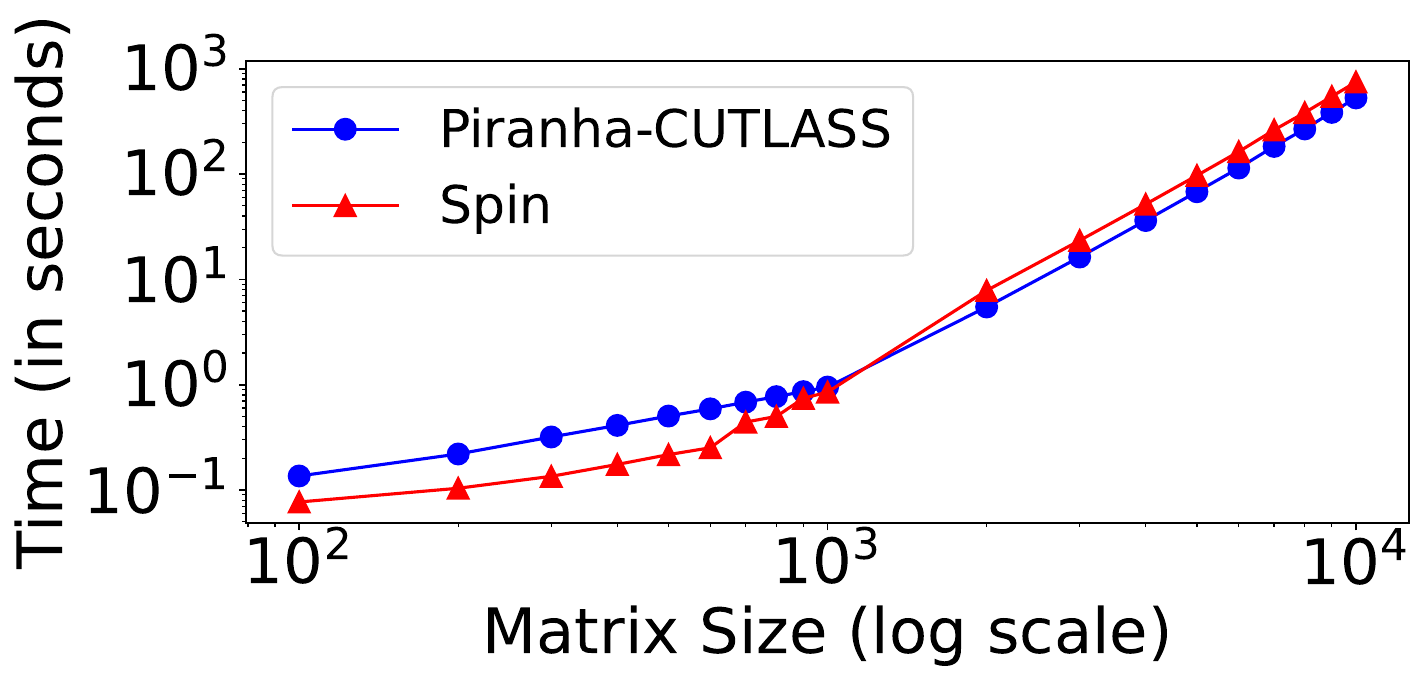}\vspace{-0.10in}}
\hfill
\subfloat[\label{fig:channel_rdma}]{\includegraphics[width=0.3\textwidth]{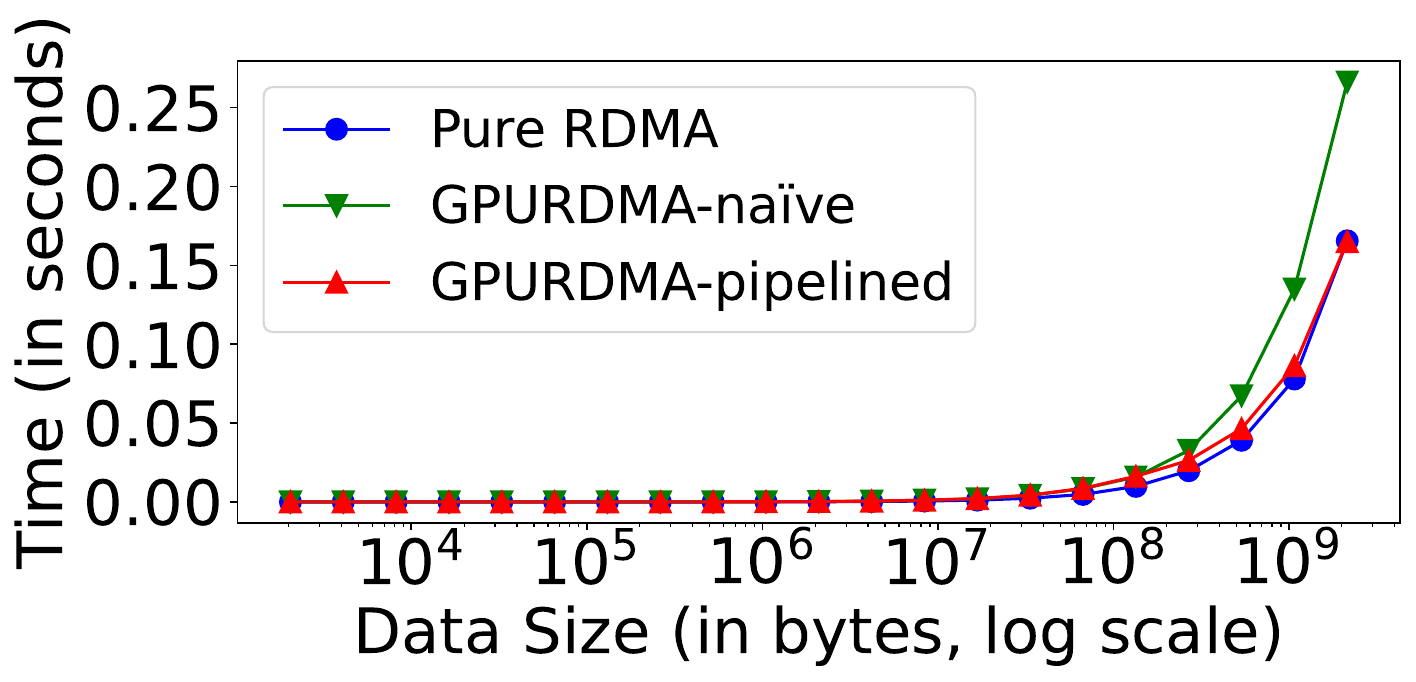}\vspace{-0.10in}}
\hfill
\subfloat[\label{fig:channel_socket}]{\includegraphics[width=0.3\textwidth]{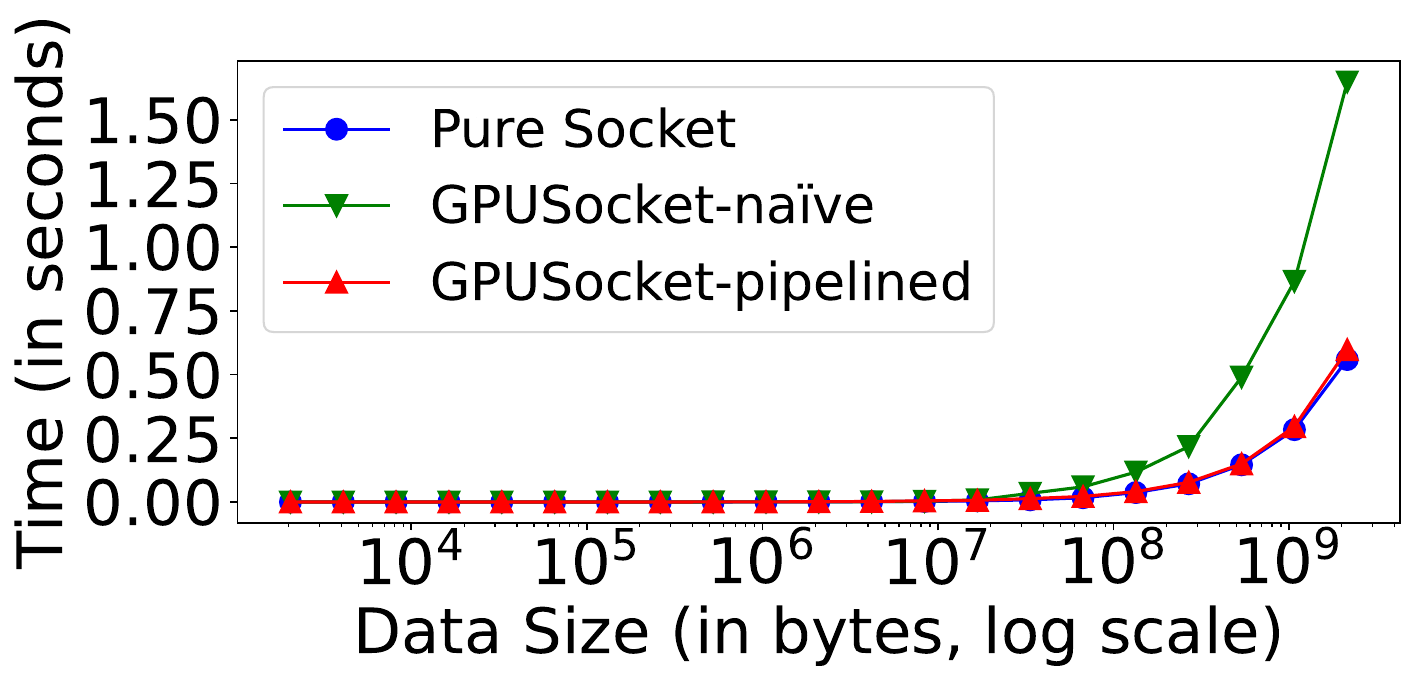}\vspace{-0.10in}}
\vspace{-0.10in}
\caption{The performance of our integer kernel and pipelined communication solution.}
\vspace{-0.10in}
\label{fig:channel}
\end{figure*}

\para{Pipelined data transfer.}
As mentioned in Section~\ref{sec:comm}, we use a \emph{network agent} on each host to pipeline the data transfer so that the data copy between hosts and GPUs would not hinder the end-to-end performance.
We record the latency of transferring data of different sizes from a GPU to a remote one, using RDMA and socket, respectively.
For comparison, we also record the latency of pure RDMA/socket channels (i.e., the time for transferring data with pure RDMA/socket channels, without the latency of data copy between GPUs and hosts), and the latency of the na\"ive method mentioned in Section~\ref{sec:comm}.
As Fig.~\ref{fig:channel_rdma} and Fig.~\ref{fig:channel_socket} show, our pipelined solution nearly eliminates the influence of data copy between hosts and GPUs, while achieving about $2\times$ performance improvement compared to the na\"ive method.

\subsection{Non-linear Functions}
\label{sec:non_linear_eval}
We conduct benchmarks on three non-linear functions: reciprocal, exponentiation, and logarithm.
We compare our algorithms to both na\"ive and state-of-the-art (SOTA) approaches.
To our knowledge, the SOTAs are:
reciprocal in \cite{catrina2010secure}, exponentiation in \cite{keller2022secure}, and logarithm in \cite{aly2019benchmarking}.
We refer to them as \emph{SOTA-rec, SOTA-exp, SOTA-log}, respectively.
Crypten~\cite{knott2021crypten} presents \emph{na\"ive} approaches for these functions:
Newton-Raphson method for reciprocal, limit approximation for exponentiation, and Householder method for logarithm.

Fig.~\ref{fig:nonlinear_accuracy_time} shows the comparison of accuracy and running time.
The left column records the accuracy, while the right column is for running time.
The parameters are all from the default settings in the articles or related codes.
Compared with the na\"ive implementations, both our algorithms and SOTAs have remarkable advantages in accuracy.
Compared to SOTAs, our algorithms for exponentiation and logarithm are more accurate and stable.
For $x > 10$, SOTA-exp even overflows.
Note that our reciprocal algorithm is based on SOTA-rec with several optimizations so that the accuracy curves overlap.


As the accuracy of the na\"ive implementations is not practical enough, we only compare the running time of our algorithms with SOTAs.
We run them on \name backend in a two-party setting.
Each party uses 20 CPU cores for computation, and the two parties communicate using RDMA.
We run the evaluation 10 times and take the average running time.
Compared to SOTAs, the advantage of our algorithms increases as the input scale grows.
When the input scale reaches $10^6$ elements, the advantage of our algorithms can be up to about $15\%$ on reciprocal/exponentiation and $100\%$ on logarithm.

\begin{figure}[tb]
\centering
\subfloat[\label{fig:reca}]{\includegraphics[width=0.22\textwidth]{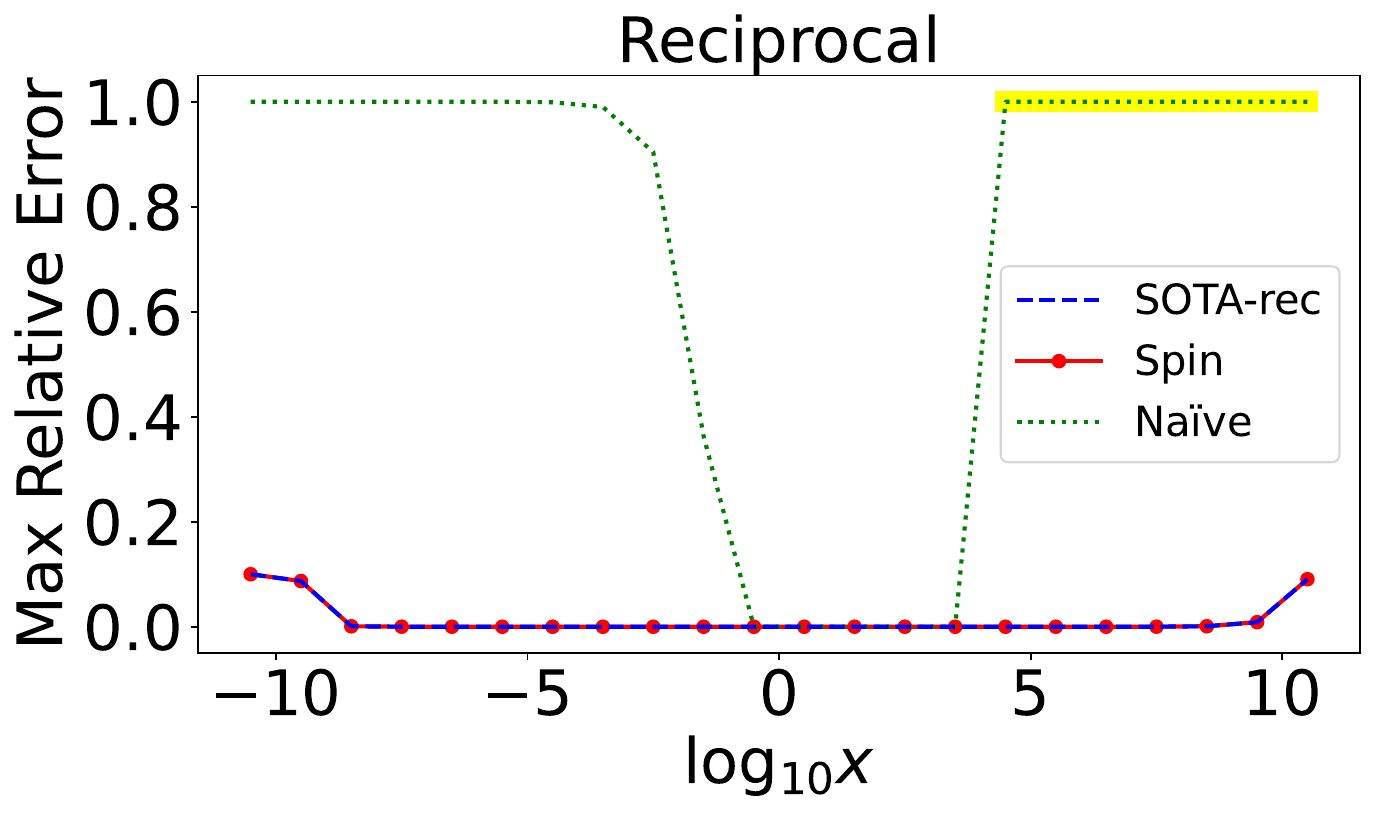}\vspace{-0.10in}}
\hfill
\subfloat[\label{fig:rect}]{\includegraphics[width=0.21\textwidth]{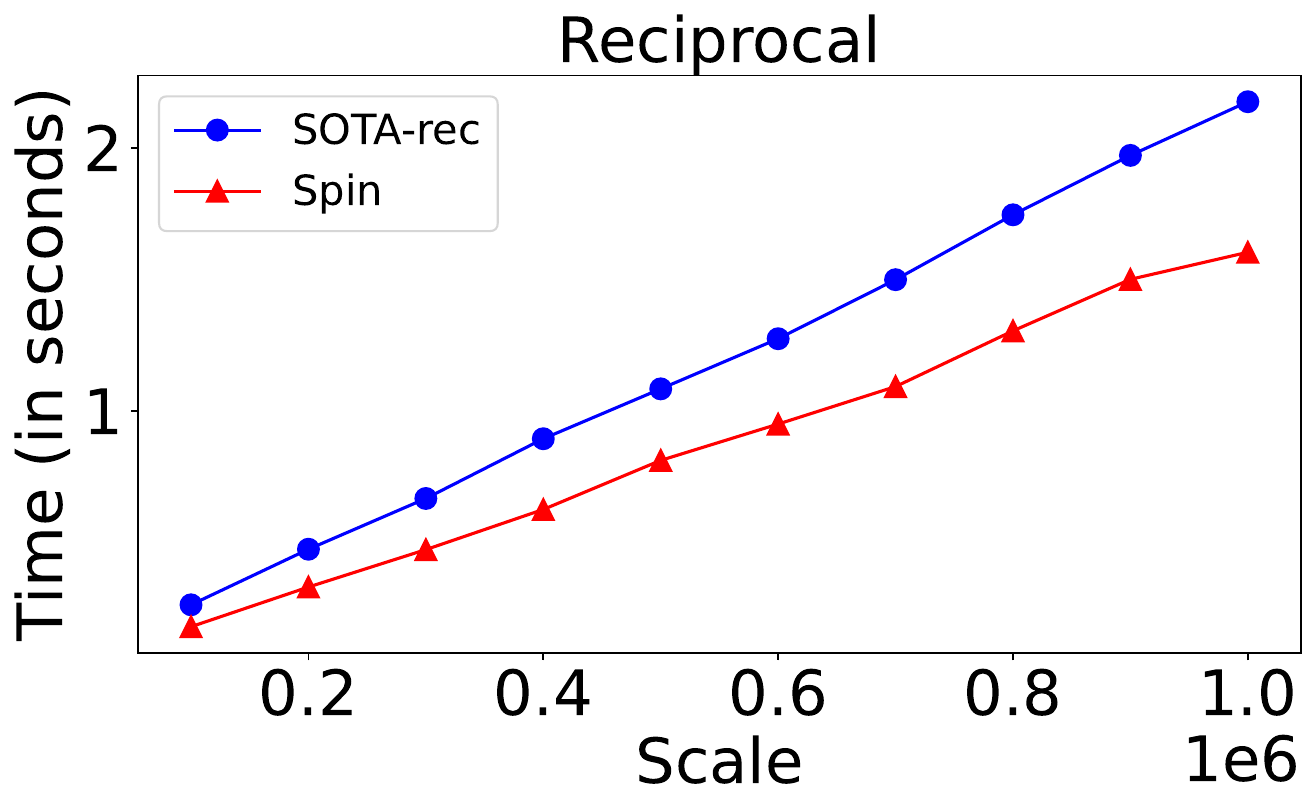}\vspace{-0.10in}}
\hfill
\subfloat[\label{fig:expa}]{\includegraphics[width=0.22\textwidth]{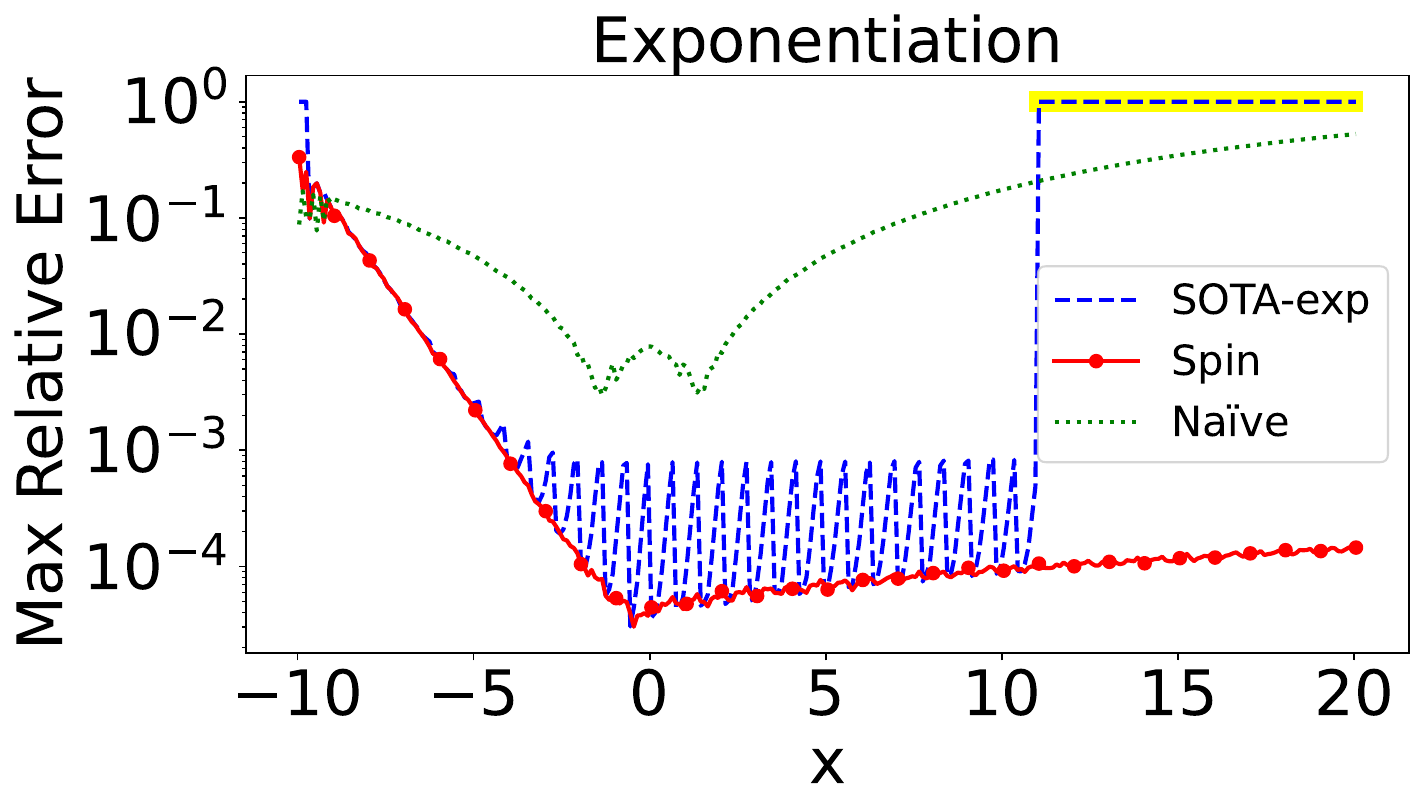}\vspace{-0.10in}}
\hfill
\subfloat[\label{fig:expt}]{\includegraphics[width=0.21\textwidth]{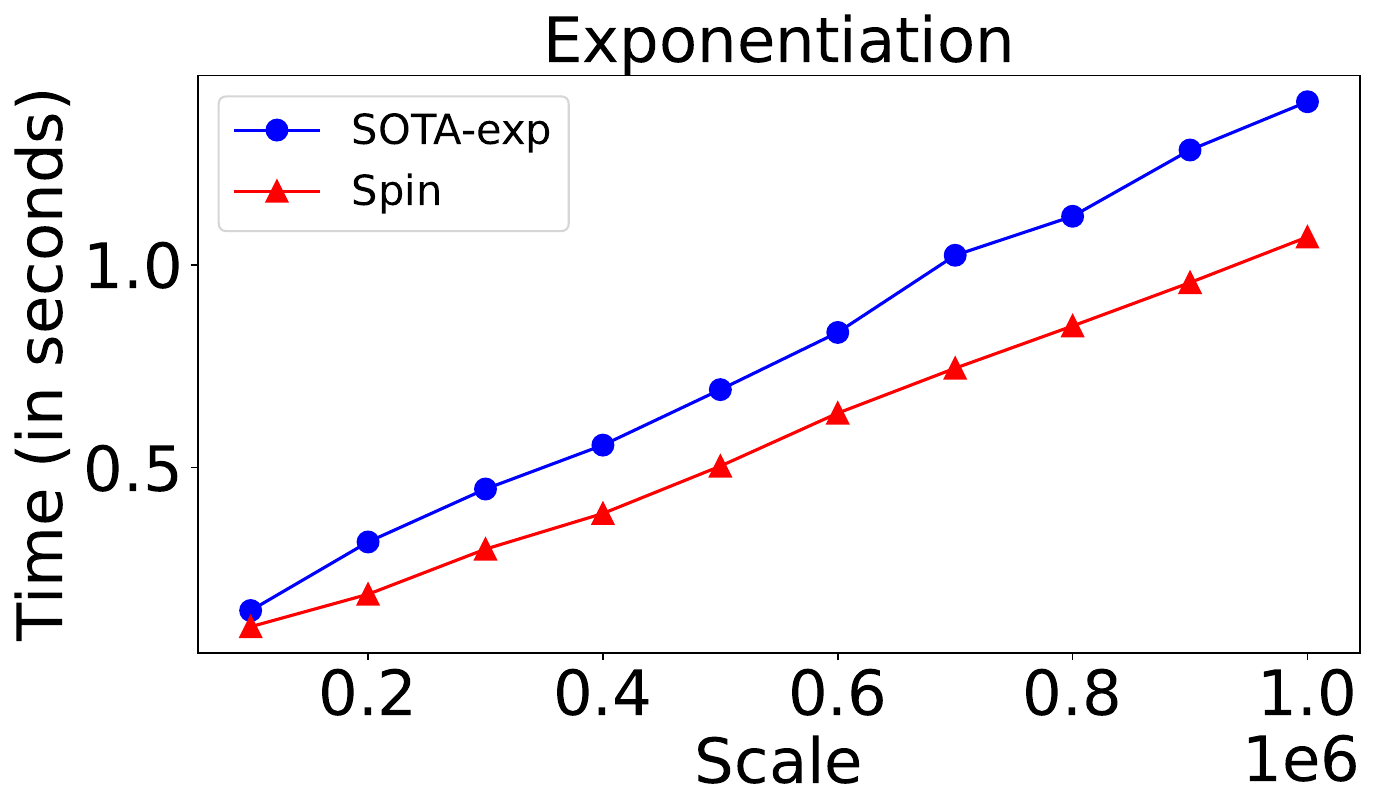}\vspace{-0.10in}}
\hfill
\subfloat[\label{fig:loga}]{\includegraphics[width=0.22\textwidth]{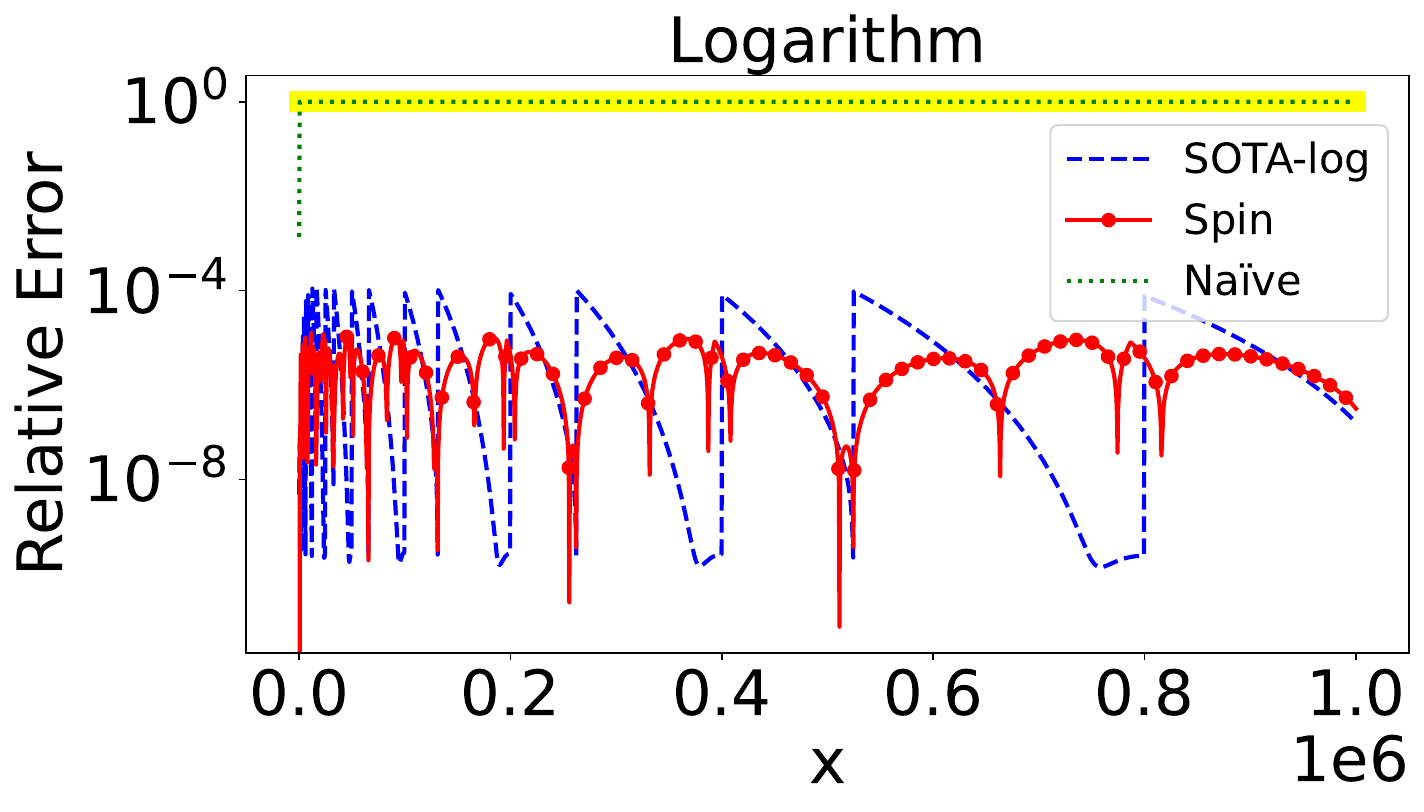}\vspace{-0.10in}}
\hfill
\subfloat[\label{fig:logt}]{\includegraphics[width=0.205\textwidth]{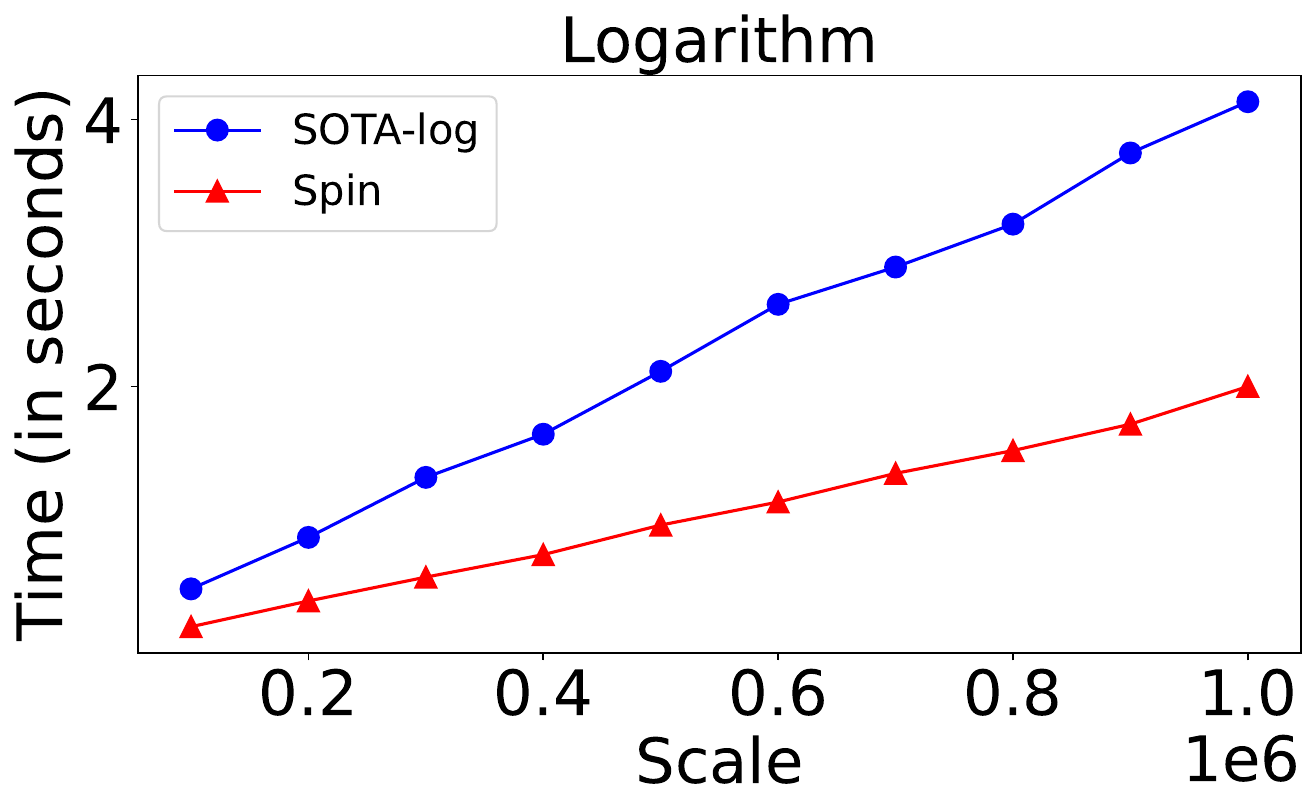}\vspace{-0.10in}}
\vspace{-0.10in}
\caption{Accuracy and running time of non-linear functions.
         The segments colored yellow mean overflows.
         } 
\vspace{-0.10in}
\label{fig:nonlinear_accuracy_time}
\end{figure}


\subsection{Neural Network Training and Inference}
We compare our empirical performance on real neural networks with Piranha~\cite{watson2022piranha}, the framework with state-of-the-art performance, to the best of our knowledge.
We compare \name in a 2-party setting with two protocols from Piranha: \emph{P-SecureML}, a 2-party protocol, and \emph{P-Falcon}, which is based on a 3-party protocol with the assumption of honest majority.

We use the same set of neural network models utilized in Piranha's evaluation~\cite{watson2022piranha}, including:
a Simple model from SecureML~\cite{mohassel2017secureml}, LeNet~\cite{lecun1998gradient}, AlexNet~\cite{krizhevsky2017imagenet}, VGG16~\cite{simonyan2014very}.
The first two models use MNIST~\cite{lecun1998mnist} as the dataset, and others use CIFAR10~\cite{krizhevsky2009learning}.
MNIST has 60,000 / 10,000 training/test samples, each of which has a size of $28\times 28$. CIFAR10 has 50,000 / 10,000 samples, each of which is $32\times 32$ with RGB channels.
Following Piranha's setting, we modify the models by replacing the max-pooling layer with the average pooling layer.
In our experiments, we remove a few layers in VGG16, as we observe that these layers have minimal impact on the final accuracy. We denote the modified version as \emph{VGG16m}.
We use the same Kaiming initialization~\cite{he2015delving}, learning rate, and mini-batch size as Piranha.
We list the detailed structure for all neural networks in Appendix~\ref{networks}.

We run Piranha and \name in the same testbed mentioned above.
The built-in network module of Piranha only supports sockets over TCP/IP.
To make a fair comparison, we further optimize the socket with pipelining and implement an RDMA-based network module for Piranha.
\name can run in three modes: CPU, GPU, and hybrid mode.
Both CPU mode and GPU mode can run with sockets or RDMA, while the hybrid mode can use both (as Section~\ref{sec:hybrid} details).


\para{Neural network training time.}
Table~\ref{eva:training_time} displays the running time of one \emph{epoch} (one epoch consists of training on all data samples once).
Consistent with Piranha's setting, we only report online time here.
We have the following observations:
\begin{itemize}[leftmargin=0.4cm]
    \item RDMA in the Simple model is slower.  This is because the computation time is so small that the overhead of moving data in RDMA dominates the total time.
    \item For large neural networks, RDMA has clear advantages in all three protocols. 
    \item For all the models other than the Simple model, the hybrid mode of \name outperforms all the other settings, achieving up to $2\times$ faster than the RDMA-enhanced P-Falcon.
\end{itemize}

As a side note, \name offers stronger security than P-Falcon. 
Piranha computes reciprocal in plaintext to accelerate \code{Softmax} for accuracy and efficiency.
The code of Piranha offers a fully secure \code{softmax} version (\url{https://github.com/ucbrise/piranha}, commit id \code{dfbcb59}),
using a quadratic polynomial to approximate reciprocal, but the training fails (the accuracy converges to about 10\%) when using this rough approximation.  
That is, even with stronger security, \name-hybrid still outperforms P-Falcon.



\begin{table}[tb]
\footnotesize
\centering
\renewcommand{\arraystretch}{1.05}
\begin{tabular}{ |c|c|c|c|c|c| } 
    \hline
    & Channel & Simple & LeNet & AlexNet & VGG16m \\ 
    \hline
    \multirow{2}{*}{\name-CPU} & socket & \textbf{26.6} & 539.6 & 552.7 & 21160.6 \\
        & RDMA & 34.0 & 510.5 & 546.7 & 20651.1 \\
    \hline
    \multirow{2}{*}{\name-GPU} & socket & 57.0 & 187.1 & 188.8 & 1432.8 \\ 
        & RDMA & 59.1 & 158.0 & 141.5 & 835.3 \\
    \hline
        \name-hybrid & hybrid & 32.0 & \textbf{120.5} & \textbf{119.4} & \textbf{804.4} \\
    \hline
    \multirow{2}{*}{P-SecureML} & socket & 65.8 & 271.9 & 385.0 & 1847.3 \\
        & RDMA & 73.8 & 185.7 & 274.3 & 1442.5 \\
    \hline
    \multirow{2}{*}{P-Falcon} & socket & 48.0 & 266.4 & 345.6 & 1692.4 \\
        & RDMA & 49.0 & 162.9 & 265.8 & 1686.8 \\
        \hline
\end{tabular}
\vspace{-0.10in}
\caption{Neural network training time for one epoch.}
\vspace{-0.05in}
\label{eva:training_time}
\end{table}

\para{Accuracy of trained models.}
The accuracy results are obtained by training 10 epochs on the training set and evaluating the test set.
The precisions (i.e., the precision parameter $p$ as defined in Section~\ref{sec:non_linear}) are 23/23/23/26 bits for Simple/LeNet/AlexNet/VGG16m.
We train the models from scratch by initialing all parameters randomly, to better demonstrate \name's capabilities.
Table~\ref{eva:accuracy} demonstrates that \name produces better accuracy for all neural networks. 
This is mainly thanks to our optimized algorithms for non-linear functions, compared to Piranha's simple polynomial approximation.  

We also train and test the model in a plaintext version using double-precision floating points,
and the result shows that \name achieves results \emph{closed to the plaintext version}, while P-SecureML and P-Falcon result in much lower accuracy in AlexNet.
In total, we can see that \name outperforms Piranha in both efficiency and accuracy.

\begin{table}[tb]
\footnotesize
\centering
\begin{tabular}{ |c|c|c|c|c|c| } 
    \hline
    & DataType & Simple & LeNet & AlexNet & VGG16m \\ 
    \hline
    \name & Fixed64 & \textbf{97.48} & \textbf{98.89} & \textbf{54.79} & \textbf{63.17} \\
    \hline
    P-SecureML & Fixed64 & 96.56 & 97.13 & 37.01 & 53.08 \\
    \hline
    P-Falcon & Fixed64 & 97.10 & 97.13 & 38.97 & 53.08 \\
    \hline
    Plaintext & \texttt{double} & 97.83 & 99.08 & 55.16 & 64.41 \\
    \hline
\end{tabular}
\vspace{-0.10in}
\caption{Neural network accuracy on the test set.}
\vspace{-0.10in}
\label{eva:accuracy}
\end{table}

\para{Inference performance. }
We also display the inference performance (measured by images per second) for all neural networks in Table~\ref{infer}.
We run Piranha using our RDMA implementation, and we set \name to operate in the hybrid mode.
The results show an improvement up to $2.6\times$ over P-Falcon. 

\begin{table}[tb]
\footnotesize
\centering
\begin{tabular}{ |c|c|c|c|c| } 
    \hline
    & Simple & LeNet & AlexNet & VGG16m \\ 
    \hline
    \name & \textbf{13485.81} & \textbf{3288.98} & \textbf{2999.40} & \textbf{166.30} \\
    \hline
    P-SecureML & 12241.10 & 1598.32 & 1472.42 & 44.65 \\
    \hline
    P-Falcon & 12025.87 & 2149.20 & 1767.94 & 62.88 \\
    \hline
\end{tabular}
\vspace{-0.10in}
\caption{Inference performance (images per second).}
\vspace{-0.10in}
\label{infer}
\end{table}


\para{Scalablility with number of parties}
\name can theoretically work on any number of parties with a dishonest majority.
We evaluate the performance of \name over the Simple model with 2-5 parties, using RDMA for communication.  
Table~\ref{eva:scale} shows the results, where $Rate = \frac{\text{running time}}{\text{2-party running time}}$.
As the number of parties increases, the communication complexity of online computation grows linearly.
We observe a stable running time that increases linearly with the number of parties.

\begin{table}[!htb]
\footnotesize
\centering
\begin{tabular}{ |c|c|c|c|c| } 
    \hline
    Parties & 2 & 3 & 4 & 5 \\ 
    \hline
    Time & 37.24 & 54.90 & 82.22 & 102.88 \\
    \hline
    Rate & 1.00 & 1.47 & 2.21 & 2.76 \\
    \hline
\end{tabular}
\vspace{-0.10in}
\caption{Scalability of \name with different numbers of parties.}
\vspace{-0.10in}
\label{eva:scale}
\end{table}

\subsection{Transformer Inference}
The goal of this experiment is to demonstrate the benefits of our novel optimizations in Section~\ref{sec:attention}.
We first test the running time and communication incurred by the first attention block of LeViT-256 with the whole validation set and additionally collect \emph{mean squared error (MSE)} values.
This evaluation can show the performance of different solutions on an attention block with the same input.
We then use LeViT-256 inference as an example of Transformer models and record the throughput and accuracy.
LeViT~\cite{graham2021levit} is a Vision Transformer for fast inference of image classification.
We choose LeViT-256 with 18.9 million parameters and use the ImageNet~\cite{deng2009imagenet} dataset, which has images of size $224\times 224$ with RGB channels.

\para{Performance of \name.}
Piranha, the state-of-the-art GPU-assisted MPC framework, has not supported Transformer models yet.
MPCFormer~\cite{li2022mpcformer} is one of the latest frameworks for secure Transformer inference.
An important issue of MPCFormer is that it does not support pre-trained models, and requires models to be retrained using \emph{2Quad approximation} for non-linear functions.
PUMA~\cite{dong2023puma}, to our knowledge, is the state-of-the-art of secure inference of large pre-trained models.
The key contributions of PUMA are new approximations for \code{GeLU}, \code{Softmax}, and \code{LayerNorm}.
However, it currently only supports running on CPUs.
So we only report the performance of \name, in line 1 of Table~\ref{table:levit}.

\para{Ablation study.}
We demonstrate the benefits of our attention-specific optimizations by replacing the attention implementation with other approaches:
\begin{itemize}[leftmargin=0.4cm]
    \item \para{\name w/P-softmax}: \name backend with all the optimizations except those in Section~\ref{sec:attention}, while \code{Softmax} is replaced with PUMA's.
    \item \para{\name w/S-nonlinear}: \name backend with all the optimizations except those in Section~\ref{sec:attention}, while the non-linear functions are implemented using SOTAs as in Section~\ref{sec:non_linear_eval}.
\end{itemize}
A notable thing is that,
LeViT uses \code{Hardswish} to replace \code{GeLU} and fuses \code{LayerNorm} to \code{Linear}, so we only borrow \code{Softmax} from PUMA.
But as both Fig.~\ref{fig:breakdown} and \cite{li2022mpcformer} show, \code{Softmax} and \code{MatMul} dominate the performance of Transformer inference,
hence replacing \code{Softmax} solely is enough to demonstrate the advantages of our optimizations compared to PUMA.
The results are reported in line 2-3 of Table~\ref{table:levit}.

For attention evaluation, we can see that our solution outperforms the others in both efficiency and communication cost while achieving the best precision.
For secure inference, \name also gets the best efficiency and better accuracy than PUMA's \code{Softmax}.
In conclusion, our attention-specific optimizations enable \name to achieve advantages in efficiency, communication, and accuracy.

\begin{table}[tb]
    \footnotesize
    \centering
    \begin{tabular}{ |c|c|c|c|c|c| }
        \hline
        \multirow{2}*{} & \multicolumn{3}{|c|}{Attention Block} & \multicolumn{2}{|c|}{LeViT-256} \\
        \cline{2-6}
        & Time & Comm. & MSE & Thro. & Accu. \\ 
        \hline
        \name & \textbf{23.50} & \textbf{56.04} & \textbf{4.10e-6} & \textbf{3.08} & \textbf{80.0} \\
        \hline
        \name w/P-softmax & 28.85 & 61.43 & 4.82e-3 & 2.57 & 79.2 \\
        \hline
        \name w/S-nonlinear & 34.20 & 82.06 & 2.72e-5 & 2.38 & \textbf{80.0} \\
        \hline
    \end{tabular}
    \vspace{-0.10in}
    \caption{Performance on attention blocks and LeViT-256 inference (time in ms, communication in MB, throughput in img/s).
             The accuracy of plaintext inference is 81.6\%.}
    \vspace{-0.10in}
    \label{table:levit}
\end{table}

	\section{Conclusions and Future Work}

Implementing machine learning on MPC frameworks necessitates numerous subtle optimizations to enhance both efficiency and accuracy.
At the MPC protocol level, it is crucial to trade-off between bits utilization and the associated costs.
At the neural network level, it is helpful to exploit the characteristics of high-level components instead of basic operations.
Additionally, new generation of hardware and networks, including GPUs and RDMA, significantly speed up computation.
However, managing memory and data movements among these devices requires considerable effort.

Our optimizations in \name prove to be highly effective.
Compared to existing highly-optimized solutions, \name simultaneously achieves superior security, performance, and accuracy.
\name paves the way for several future advancements.
Firstly, we plan to utilize a GPU cluster for each party and design heterogeneous computing scheduling specific to MPC tasks.
Additionally, we aim to support larger inputs exceeding GPU memory capacity by transparently caching data in host memory and disks, which is vital for accommodating larger models.
Lastly, we will explore more optimizations for secure training of large models.

	\bibliographystyle{plain}
	\bibliography{literature}
	\clearpage
	\appendix
	\section{Neural Network Structure}
\label{networks}
For the configuration of a convolution layer, we use a format of (filter channel, filter size, stride, padding). For the configuration of an average pooling layer, the format is (window size, stride).

\begin{table}[htbp]
\footnotesize
\centering
\begin{tabular}{ |c|c|c| } 
    \hline
    Layer $\#$ & Layer type & Configuration \\ 
    \hline
    1 & Fully-connected & $784\times 128$ \\
    \hline
    2 & ReLU & NA \\    
    \hline
    3 & Fully-connected & $128\times 128$ \\
    \hline
    4 & ReLU & NA \\
    \hline
    5 & Fully-connected & $128\times 10$ \\
    \hline
\end{tabular}
\vspace*{3mm}
\caption{Simple network with 722,848 parameters.}
\label{Network A}
\end{table}

\begin{table}[htbp]
\footnotesize
\centering
\begin{tabular}{ |c|c|c| } 
    \hline
    Layer $\#$ & Layer type & Configuration \\ 
    \hline
    1 & Convolution & $(20,5\times 5,1,0)$ \\
    \hline
    2 & Average pooling & $(2\times 2,2)$ \\
    \hline
    3 & ReLU & NA \\    
    \hline
    4 & Convolution & $(50,5\times 5,1,0)$ \\
    \hline
    5 & Average pooling & $(2\times 2,2)$ \\
    \hline
    6 & ReLU & NA \\
    \hline
    7 & Fully-connected & $800\times 500$ \\
    \hline
    8 & ReLU & NA \\
    \hline
    9 & Fully-connected & $500\times 10$ \\
    \hline
\end{tabular}
\vspace*{3mm}
\caption{LeNet with 430,500 parameters.}
\label{Network B}
\end{table}

\begin{table}[htbp]
\footnotesize
\centering
\begin{tabular}{ |c|c|c| } 
    \hline
    Layer $\#$ & Layer type & Configuration \\ 
    \hline
    1 & Convolution & $(96,11\times 11,4,9)$ \\
    \hline
    2 & Average pooling & $(3\times 3,2)$ \\
    \hline
    3 & ReLU & NA \\    
    \hline
    4 & Convolution & $(256,5\times 5,1,1)$ \\
    \hline
    5 & Average pooling & $(2\times 2,1)$ \\
    \hline
    6 & ReLU & NA \\
    \hline
    7 & Convolution & $(384,3\times 3,1,1)$ \\
    \hline
    8 & ReLU & NA \\
    \hline
    9 & Convolution & $(256,3\times 3,1,1)$ \\
    \hline
    10 & ReLU & NA \\
    \hline
    11 & Fully-connected & $256\times 256$ \\
    \hline
    12 & ReLU & NA \\
    \hline
    13 & Fully-connected & $256\times 256$ \\
    \hline
    14 & ReLU & NA \\
    \hline
    15 & Fully-connected & $256\times 10$ \\
    \hline
\end{tabular}
\vspace*{3mm}
\caption{AlexNet with 2,552,874 parameters.}
\label{Network C}
\end{table}

\begin{table}[htbp]
\footnotesize
\centering
\begin{tabular}{ |c|c|c| } 
    \hline
    Layer $\#$ & Layer type & Configuration \\ 
    \hline
    1 & Convolution & $(64,3\times 3,1,1)$ \\
    \hline
    2 & ReLU & NA \\    
    \hline
    3 & Convolution & $(64,3\times 3,1,1)$ \\
    \hline
    4 & Average pooling & $(2\times 2,2)$ \\
    \hline
    5 & ReLU & NA \\
    \hline
    6 & Convolution & $(128,3\times 3,1,1)$ \\
    \hline
    7 & ReLU & NA \\    
    \hline
    8 & Convolution & $(128,3\times 3,1,1)$ \\
    \hline
    9 & Average pooling & $(2\times 2,2)$ \\
    \hline
    10 & ReLU & NA \\
    \hline
    11 & Convolution & $(256,3\times 3,1,1)$ \\
    \hline
    12 & ReLU & NA \\    
    \hline
    13 & Convolution & $(256,3\times 3,1,1)$ \\
    \hline
    14 & Average pooling & $(2\times 2,2)$ \\
    \hline
    15 & ReLU & NA \\
    \hline
    16 & Convolution & $(512,3\times 3,1,1)$ \\
    \hline
    17 & ReLU & NA \\    
    \hline
    18 & Convolution & $(512,3\times 3,1,1)$ \\
    \hline
    19 & Average pooling & $(2\times 2,2)$ \\
    \hline
    20 & ReLU & NA \\
    \hline
    21 & Convolution & $(512,3\times 3,1,1)$ \\
    \hline
    22 & Average pooling & $(2\times 2,2)$ \\
    \hline
    23 & ReLU & NA \\
    \hline
    24 & Fully-connected & $512\times 256$ \\
    \hline
    25 & ReLU & NA \\
    \hline
    26 & Fully-connected & $256\times 256$ \\
    \hline
    27 & ReLU & NA \\
    \hline
    28 & Fully-connected & $256\times 10$ \\
    \hline
\end{tabular}
\vspace*{3mm}
\caption{Modified VGG16 (VGG16m) with 7,439,306 parameters.}
\label{Network D}
\end{table}

\section{Polynomial Coefficients}
\label{appendix:coefficients}
We give the coefficients of $f_1(z), f_2(z)$ in Table~\ref{coefficients},
where each polynomial is denoted as $k_4 \cdot z^4 + k_3 \cdot z^3 + k_2 \cdot z^2 +k_1 \cdot z+ k_0$.

\begin{table}[htbp]
    \footnotesize
    \centering
    \renewcommand{\arraystretch}{1.05}
    \begin{tabular}{ |c|c|c| } 
     \hline
     Coefficient & $f_1(z)$ & $f_2(z)$ \\ 
     \hline
     $k_4$  & 0.01353417  & -0.268344 \\ 
     $k_3$  & 0.05201146  & 0.496404 \\ 
     $k_2$  & 0.24144276  & -0.726980 \\ 
     $k_1$  & 0.69300383  & 1.442547 \\ 
     $k_0$  & 1.00000259  & 0 \\
     $k_4'$ & 0.015625    & -0.25 \\
     $t_3$  & 0.96074360  & -0.46246981 \\
     $t_2$  & 6.15066406  & 0.712932283 \\
     $t_1$  & 24.02000383 & -0.366125013 \\
     $t_0$  & 23.85013773 & -0.858977912 \\
     \hline
    \end{tabular}
 \vspace*{3mm}
 \caption{Coefficients of Polynomial Approximation}
 \label{coefficients}
\end{table}

\section{Security Analysis of Algorithms}
\label{appendix:security}

We assume a semi-honest adversary $\mathcal{A}$ who can corrupt any subset of the parties (no more than $n-1$, where $n$ is the number of parties) before the protocol begins.
We say that a protocol $\pi$ securely implements a functionality $F$ if any adversary $\mathcal{A}$ with probabilistic polynomial time cannot distinguish
between a protocol $\pi$ and a functionality $F$ attached to a simulator $\mathcal{S}$.

\para{Security definition.}
Security is defined by comparing a \emph{real} and \emph{ideal} interaction.
We denote $\text{Real}_{\pi,\mathcal{Z}(\mathcal{A},\lambda)}$
the bit output by environment $\mathcal{Z}$ after the execution of protocol $\pi$ on security parameter $\lambda$, which is a real interaction involving protocol $\pi$. We denote $\text{Ideal}_{F,\mathcal{Z}(S)}$ the bit output by $\mathcal{Z}$ after the execution of trusted
functionality.
As an interface, $\mathcal{S}$, also known as a simulator, can access $F$ and simulate a run of $\pi$ with the same input-output behavior.
$\mathcal{Z}$ can observe the inputs of $\mathcal{A}$.

We say that $\pi$ realizes $F$ if there exists a simulator $\mathcal{Z}$ s.t. for all environment $\mathcal{Z}$ holds that
\begin{equation*}
    |\Pr[\text{Real}_{\pi,\mathcal{Z}(\mathcal{A},\lambda)}] - \Pr[\text{Ideal}_{F,\mathcal{Z}(S)}]|\leq 2^{-\lambda}
\end{equation*}
The $F_{\text{ABB}}$-Hybrid model is the real-world model where the parties have access to an ideal functionality $F_{\text{ABB}}$.
We say that $\pi$ realizes $F$ in the $F_{\text{ABB}}$-Hybrid model if there exists a simulator $\mathcal{Z}$ s.t. for all environment $\mathcal{Z}$ holds that
\begin{equation*}
    |\Pr[\text{Hybrid}^{F_{\text{ABB}}}_{\pi,\mathcal{Z}(\mathcal{A},\lambda)}] - \Pr[\text{Ideal}_{F,\mathcal{Z}(S)}]|\leq 2^{-\lambda},
\end{equation*}
where $\text{Hybrid}^{F_{\text{ABB}}}_{\pi,\mathcal{Z}(\mathcal{A},\lambda)}$ is the bit output by environment $\mathcal{Z}$ after the execution of protocol $\pi$ based on $F_{\text{ABB}}$-Hybrid model on security parameter $\lambda$.
In this paper, we only give the proof for security of the Reciprocal algorithm based on the $F_{\text{ABB}}$-Hybrid model, and the security proofs for the other algorithms are similar. 

\begin{mybox}
\centerline{Functionality $F_{\text{Reciprocal}}$}
$\textbf{INPUT: }\share{x,p}{A},x\neq 0, \text{bit length } l,\text{precision } p\\
\textbf{OUTPUT: }\share{y,p}{A},y\approx\frac{1}{x}$
\end{mybox}

\begin{theorem}
    In the $F_{\text{ABB}}$-Hybrid model, our reciprocal algorithm \ref{alg:ssnr} realizes the ideal functionality $F_{\text{Reciprocal}}$
    in presence of a semi-honest adversary.
\end{theorem}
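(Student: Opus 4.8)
The plan is to invoke the standard simulation paradigm for security in the $F_{\text{ABB}}$-Hybrid model, exploiting the fact that Algorithm~\ref{alg:ssnr} is written entirely as a sequence of calls to the arithmetic black-box operations of Sections~\ref{sec:basic} and~\ref{sec:derived}. First I would recall that each such building block (\code{Decompose}, \code{XOR}, \code{Bit2A}, \code{LMO}, \code{Swap}, \code{Compose}, \code{Truncate}, \code{Mul}, \code{Scale}, \code{Add}) is already assumed to be a UC-secure realization of its own ideal functionality, i.e.\ it is part of the $F_{\text{ABB}}$ interface; an adversary corrupting up to $n-1$ parties learns nothing from such a call beyond the (secret-shared) output, and its local view of each call is simulatable given only the corrupted parties' shares of inputs and outputs. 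The key observation is that every intermediate value produced in Algorithm~\ref{alg:ssnr} is held only in shared form $\share{\cdot}{A}$ or $\share{\cdot}{B}$, and no value is ever opened or revealed to any party during the protocol.

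The main construction is the simulator $\mathcal{S}$. Given the functionality $F_{\text{Reciprocal}}$ and the set of corrupted parties, $\mathcal{S}$ proceeds as follows: it obtains the corrupted parties' shares of the input $\share{x,p}{A}$ from the environment, then simulates the protocol line by line. For each line that is a call to an $F_{\text{ABB}}$ sub-functionality, $\mathcal{S}$ simply invokes the corresponding sub-simulator, feeding it the (already simulated) corrupted-party shares of that line's inputs and receiving back uniformly random corrupted-party shares of that line's outputs (this is exactly how fresh secret sharings look to a strict subset of parties). Local operations — \code{Add}, \code{Scale}, the \code{XOR}s in lines~\ref{rec:xor_start}--\ref{rec:xor_end}, the \code{Swap}s, and zeroing the high bits — are performed by $\mathcal{S}$ directly on the corrupted parties' shares, since they require no interaction. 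Finally, when the protocol reaches its last line, $\mathcal{S}$ queries $F_{\text{Reciprocal}}$ to obtain the corrupted parties' shares of the output $\share{y,p}{A}$ and patches the simulated shares of the final line to be consistent with them (the honest parties' shares are never seen by $\mathcal{A}$, so this patching is always possible and produces the correct distribution).

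The indistinguishability argument is then a routine hybrid argument: starting from the real $F_{\text{ABB}}$-Hybrid execution, replace each sub-protocol call, one at a time, by its ideal/simulated counterpart; each replacement changes the environment's view by at most the distinguishing advantage of that sub-protocol, and there are only a constant number of calls, so the accumulated advantage stays within the $2^{-\lambda}$ bound. Correctness of the output distribution — that the $y$ returned satisfies $y \approx 1/x$ — is inherited from the algorithmic analysis already given in the body of the paper (the Newton–Raphson product form of $h$, the LMO-based scaling into $[0.5,1)$, the sign handling, and the negligible $2^{-p}$ errors from the omitted $+1$ and from \code{Truncate}), so $\mathcal{S}$'s final patch is consistent with a genuinely-distributed correct output.

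I expect the only real subtlety — and hence the ``hard part'' — to be bookkeeping the error terms so that the claimed functionality $F_{\text{Reciprocal}}$ is stated with a $y$ that truly matches the algorithm's output distribution (in particular, that \code{Truncate}'s $\pm 1$ probabilistic deviation and the dropped low-order corrections are absorbed into the ``$\approx$'' in the functionality definition, and that the edge cases $|x|>2^p$ and $|x|<2^{-p}$ underflow exactly as the algorithm's comments claim). The cryptographic core — simulating the view — is essentially mechanical once one grants that $F_{\text{ABB}}$ is UC-secure and that no wire value is ever opened; the composition theorem of~\cite{canetti2001universally} then does the rest. I would keep the write-up short, presenting $\mathcal{S}$ explicitly, citing the UC composition theorem, and relegating the error-accounting to the correctness discussion already in Section~\ref{sec:non_linear}.
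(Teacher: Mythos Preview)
Your proposal is correct and follows essentially the same approach as the paper: construct a simulator in the $F_{\text{ABB}}$-Hybrid model that walks through Algorithm~\ref{alg:ssnr} line by line, relying on the fact that every step is an $F_{\text{ABB}}$ call (or local) and no intermediate value is ever opened, then argue indistinguishability from the security of the sub-primitives. If anything, your write-up is more careful than the paper's own proof, which is quite terse and does not spell out the hybrid argument, the output-patching step, or the error-accounting you flag as the subtlety.
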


\begin{proof}[Proof]
Consider the worst-case scenario: except for an honest party $P_h$, all the other $n-1$ parties are corrupted.

\para{Initialize.} Now we construct a simulator $S$ for adversary $\mathcal{A}$ and any environment $\mathcal{Z}$
who chooses all parties' inputs and sees all intermediate results and outputs obtained by an adversary.
In an ideal world, $S$ simulates the honest $P_h$'s input share with uniformly random values $r$.
Note that $S$ does not need to alter the input of the corrupted adversary. Now the actual input changes from $[x]$ to $[x']$.
$x'$ is from a uniform distribution.

\para{Indistinguishability.} We now argue the indistinguishability of the real and ideal executions to an environment $\mathcal{Z}$.
Recall that $\mathcal{Z}$ can choose the inputs for all parties. The view of $\mathcal{Z}$ then consists of these inputs ($\share{x,p}{A}$, $\share{x',p}{A}$),
the messages received by the adversary (namely shares of intermediate results obtained by the adversary),
and output ($\share{y,p}{A}$) shares obtained by the adversary.

For line~\ref{rec:decompose} of Algorithm~\ref{alg:ssnr}, we use $\code{Decompose}$ primitive by $F_{\text{ABB}}$.
From the security of the arithmetic secret sharing, $\mathcal{Z}$ can only see uniformly random strings of bits. 
In an ideal world, the original input $[x']$ is uniformly random.
Thus, in $\mathcal{Z}$'s view, the $\share{x_{0:l-1}}{B}$ in the real world is indistinguishable from the $\share{x'_{0:l-1}}{B}$ of the ideal world.

For line~\ref{rec:xor_start} - \ref{rec:lmo}, in the real world, all messages received by the adversary are output by
series of $F_{\text{ABB}}$ primitives. So $\share{x_{0:l-1}}{B}$ and $\share{x'_{0:l-1}}{B}$ are indistinguishable,
which leads to the indistinguishability of $\share{x_{0:l-2}}{B}$ and $\share{x'_{0:l-2}}{B}$.
In both the real world and ideal world, in $\mathcal{Z}$'s view, these strings of bits are uniformly random.

For line~\ref{rec:swap_start} - \ref{rec:swap_end}, since the output of the previous step is indistinguishable
and the bit strings are uniformly random, the operation of swapping their order and setting them to zero is indistinguishable.

For line~\ref{rec:compose} - \ref{rec:flip_back}, we again use the primitives by $F_{\text{ABB}}$ and this process is indistinguishable.
\end{proof}

\section{Baseline Exponentiation Algorithm}
\label{appendix:baseline_exp}
Algorithm~\ref{alg:qte} shows the exponentiation algorithm in \cite{keller2022secure}.

\begin{algorithm}[tb]
\small
\caption{Exponentiation of \cite{keller2022secure}.}\label{alg:qte}
\KwData{$\share{x,p}{A}$,\text{ bit length }$l$,\text{ precision }$p$}
\KwPara{Taylor approximation $g(z)$}
\KwResult{$\share{y,p}{A}\approx \share{e^x,p}{A}$}
$\share{x',p}{A}\leftarrow\code{Truncate}(\share{x,p}{A}\cdot[\log_2 e]_p$\\
$\share{x_{0:l-1}}{B}\leftarrow\code{Decompose}(\share{x',p}{A})$\\
$\share{b}{B}\leftarrow\share{x_{0:l-1}}{B}+\share{l-p-1}{B}>0$\\
$c\leftarrow\ceil{\log_2(l-p)}$\label{exp_keller:compare}\\
$\share{x_{p,p+c-1}}{A}\leftarrow\code{Bit2A}(\share{x_{p,p+c-1}}{B}$\\
$\share{I}{A}\leftarrow\prod_{i=0}^{c-1} \Big( 2^{2^{i}}\share{x_{p+i}}{A}-\share{x_{p+i}}{A}+1\Big)$\\
$\share{t,p}{A}\leftarrow\code{Compose}(\share{x_{0:p-1}}{B}$\\
$\share{D,p}{A}\leftarrow\code{Evaluate}(\share{g(t),p}{A})$\\
$\share{y_{+},p}{A}\leftarrow\share{I}{A}\cdot\share{D,p}{A}$\\
$\share{y_{-},p}{A}\leftarrow\code{Truncate}(\share{y_{+},p}{A},2^c)$\label{exp_keller:correct}\\
$\share{s}{A}\leftarrow\code{Bit2A}(\share{x_{l-1}}{B})$\\
$\share{y,p}{A}\leftarrow\share{s}{A}(\share{y_{-},p}{A}-\share{y_{+},p}{A})+\share{y_{+},p}{A}$\\
$\share{y,p}{A}\leftarrow(1-\code{Bit2A}(\share{b}{B}))\cdot\share{y,p}{A}$\label{exp_keller:result}
\end{algorithm}

\section{Attention-Specific Exponentiation}
\label{appendix:exponentiation}

Algorithm~\ref{alg:attention_exp} details how we evaluate exponentiation in an attention block when we perform Transformer-based model inference.
Note that, line~\ref{sexp:bias} - \ref{sexp:decompose} work with bitlength $l/2$, while line~\ref{sexp:bit2a} and the following work with bitlength $l$.
That is, in the case $l = 64$, we convert the \code{Fixed32} input to a \code{Fixed64} output for free.
Line~\ref{sexp:sign} uses $\share{x_{l/2-1}}{B}$ as the sign bit, due to the fact that $\share{x_{l/2-1}}{B}$ equals $\share{x_{l-1}}{B}$ when $l/2$ bits are enough to represent the input.

\begin{algorithm}[tb]
\small
\caption{Exponentiation in attention.}\label{alg:attention_exp}
\KwData{\text{Bit length }$l$,\text{ precision }$p$, $\share{x',p}{A}$ where $x'$ equals $\log_2e \cdot x$ and is of bitlength $l/2$}
\KwPara{Remez approximation $f_1'(z)$ (the completing square version of $f_1(z)$)}
\KwResult{$\share{y,p}{A}\approx \share{e^x,p}{A}$ of bitlength $l$}
$\share{x'',p}{A}\leftarrow [p]_{p}+\share{x',p}{A}$ \label{sexp:bias}\\
$\share{x_{0:l/2-1}}{B}\leftarrow\code{Decompose}(\share{x'',p}{A})$\label{sexp:decompose}\\
$c\leftarrow\ceil{\log_2 p}$\\
$\share{x_{p:p+c-1}}{A}\leftarrow\code{Bit2A}(\share{x_{p:p+c-1}}{B})$ \label{sexp:bit2a}\\
$\share{I}{A}\leftarrow\prod_{i=0}^{c-1} \Big( 2^{2^{i}}\share{x_{p+i}}{A}-\share{x_{p+i}}{A}+1\Big)$\\
$\share{t,p}{A}\leftarrow\code{Compose}(\share{x_{0:p-1}}{B})$\\
$\share{y,p}{A}\leftarrow\share{I}{A}\cdot\code{Evaluate}(\share{f_1'(t),p}{A})$\\
$\share{s}{A}\leftarrow 1-\code{Bit2A}(\share{x_{l/2-1}}{B})$ \label{sexp:sign}\\
$\share{y,p}{A}\leftarrow\code{Truncate}(\share{s}{A}\cdot\share{y,p}{A},p)$
\end{algorithm}

\end{document}